\documentclass{sig-alternate}
%

\title{Indexing Reverse Top-k Queries}

\numberofauthors{1}

\author{
\alignauthor Sean Chester, Alex Thomo, S. Venkatesh, and Sue Whitesides\\
       \affaddr{Computer Science Department}\\
       \affaddr{University of Victoria}\\
       \affaddr{PO Box 1700 STN CSC}\\
       \affaddr{Victoria, Canada}\\
       \email{\{schester, sue\}@uvic.ca, \{thomo, venkat\}@cs.uvic.ca}
}
\date{\today}

\newcommand{\ea}{~et~al.}

\newcommand{\vvec}{\textbf{v}}

\newcommand{\topk}{\mathrm{top}$-$k}
\newcommand{\mrtop}{maxRTOP}
\newcommand{\dd}{top-$k$ rank depth}

\newcommand{\ddf}{depth}

\newcommand{\origin}{O}

\newtheorem{theorem}{Theorem}[section]
\newtheorem{lemma}[theorem]{Lemma}
\newtheorem{corollary}[theorem]{Corollary}
\newtheorem{proposition}[theorem]{Proposition}
\newtheorem{definition}{Definition}

\usepackage{balance}
\usepackage{amsmath, amssymb}
\usepackage{algorithm, algorithmic, float}
\usepackage{graphicx}
\usepackage[colorlinks=false]{hyperref}
\hypersetup{
 pdfauthor = {Sean Chester, Alex Thomo, S. Venkatesh, and Sue Whitesides},
 pdftitle = {Indexing Reverse Top-k Queries},
 pdfsubject = {Maximal Reverse Top-k Queries},
 pdfkeywords = {reverse $\topk$, top k depth, arrangements of lines, access methods}
}

\graphicspath{.}

\begin{document}
\maketitle

\begin{abstract}
We consider the recently introduced monochromatic reverse $\topk$ queries which asks for, given a new tuple $q$ and a dataset $\mathcal{D}$, all possible $\topk$ queries on $\mathcal{D}\cup\{q\}$ for which $q$ is in the result.  Towards this problem, we focus on designing indexes in two dimensions for repeated (or batch) querying, a novel but practical consideration.  We present the novel insight that by representing the dataset as an arrangement of lines, a critical $k$-polygon can be identified and used exclusively to respond to reverse $\topk$ queries.  We construct an index based on this observation which has guaranteed worst-case-logarithmic query cost.

We implement our work and compare it to related approaches, demonstrating that our index is fast in practice.  Furthermore, we demonstrate through our experiments that a $k$-polygon is comprised of a small proportion of the original data, so our index structure consumes little disk space.
\end{abstract}

\category{H.3.1}{Information Systems Applications}{Content Analysis and Indexing}[indexing methods]
\category{F.2.2}{Analysis of Algorithms and Problem Complexity}{Nonnumerical Algorithms and Problems}[geometrical problems and computations]


\keywords{Reverse $\topk$, $\topk$ depth, arrangements of lines, access methods}

\section{Introduction}
Imagine a software engineering team in the early stages of developing a new single-player console game.  Given aggressive timelines until product launch, they need to prioritise the development efforts.  Market research reveals that games in this category are typically assessed by end-users in terms of the quality of the graphics and the intelligence of the AI.  Furthermore, different users express different trade-offs in terms of which of these two metrics they believe to be the most important.  

If most games in this category have focused on the development of graphics, then the graphics-focused market is perhaps saturated and the engineering team may encounter more success by focusing instead on the development of the AI.  More broadly stated, the exposure, and indeed success, of a product depends largely on how well it ranks against other, similar products.  The task with which the engineering team is faced here, in fact, is to assess how to maximise ``$\topk$ exposure'', the breadth of $\topk$ queries for which their product is returned.

Computing the $\topk$ exposure of a product is the objective of a reverse $\topk$ query, introduced recently by Vlachou\ea~\cite{VlachouRTOP}.  A {\em traditional} (linear) $\topk$ query is a weight vector $\left<w_1, w_2\right>$ that assigns a weight to each attribute of the relation $\mathcal{D}$.  The result set contains the $k$ tuples $\left(a_1, a_2\right)\in\mathcal{D}$ for which $w_1*a_1+w_2*a_2$ is highest.  A {\em reverse} $\topk$ query, given as input a numerical dataset $\mathcal{D}$, a value $k$, and a new query tuple $q$, reports the set of {\em traditional} $\topk$ queries on $\mathcal{D}\cup\{q\}$ for which $q$ is in the result set.  

In this paper, we focus on two dimensions and on the version of the problem in which the infinitely many possible traditional $\topk$ queries are considered.\footnote{This version of the problem was termed a {\em monochromatic} reverse $\topk$ query by Vlachou\ea.  The alternative is the {\em bichromatic} version in which the traditional $\topk$ queries to be considered are limited to those enumerated in a finite relation.}
Consequently here, the result of a {\em reverse} $\topk$ query is an infinite set of weight vectors, which throughout this paper we assume to be represented as a set of disjoint angular intervals.  For example, the angular interval $\left(\pi/6, \pi/4\right)$ describes the infinitely many weight vectors with an angular distance from the positive $x$-axis between $\pi/6$ and $\pi/4$, exclusive.  

\subsubsection*{A Broader Perspective}
The processing of a reverse $\topk$ query is in itself interesting, but it is also important to consider where it fits within the context of a broader workflow.  That is to say, what prompts the query and what occurs after the query is executed affects how the query should be processed.  This consideration of broader context motivates our work. 

A single reverse $\topk$ query executed on its own is informative but not very actionable.  A more likely scenario is that an analyst is trying to compare the impact of {\em many} product options in order to gauge which might be the most successful among them.  In the case of the game development scenario, it is more useful for the engineering team to evaluate {\em many} trade-offs between AI and graphics in order to compare what degrees of relative prioritisation will make the game stand out to the broadest range of end-users.  

With this in mind, we propose the first indexing-based solution to reverse $\topk$ queries.  The computational advantage of this approach is that the majority of the cost can be absorbed before the queries arrive.  In contrast, the existing techniques (of Vlachou\ea~\cite{VlachouRTOP} and of Wang\ea~\cite{wangMRTOP}) inherently depend on knowledge of the current query, so the linear-cost computation must be restarted for each of the many queries in a batch.

It is crucial to consider also what becomes of the output for a reverse $\topk$ query.  If it is meant for direct human consumption, then the end-user can only interpret a succinct representation.  Fragmenting output intervals into many sub-intervals would overwhelm the user.  
The argument we make here is that not every correct output is equivalent.  In particular, the (sometimes severe and unsorted) fragmentation of output intervals by existing techniques is quite undesirable.

To address the importance of output, we define {\em maximal} reverse top-k (\mrtop) queries, in which adjacent output intervals of a reverse $\topk$ query must be merged.  In this sense, each reported interval in the solution is {\em maximal}.  A nice property of our index-based approach is that it naturally produces this higher quality, maximal output.

\subsubsection*{Our Approach}
Our approach to the \mrtop\ problem is to consume the cost of sorting an approximation of and conducting a plane sweep on the $k$-skyband of $\mathcal{D}$ (a subset of $\mathcal{D}$ that we describe in Definition~\ref{def:skyband}) in order to design an index with query cost guaranteed to be logarithmic in the size of the true $k$-skyband.  

We achieve this through novel geometric insight into the problem.  Conceptually, we transform each tuple of $\mathcal{D}$ into a line in Euclidean space, constructing an {\em arrangement of lines} (i.e., set of vertices, edges, and faces based on their intersections; see Definition~\ref{def:arrangement}).  From that arrangement, we show that a critical star-shaped polygon can be extracted for each value of $k$.  The importance of this polygon is that if we apply the same transform to the new query tuple $q$ to produce a line $l_q$, then the \mrtop\ response is exactly the intersection of $l_q$ with the interior of this polygon.  So, with this insight, the challenge becomes to effectively index the polygon.  

A crucial observation that we derive is that the polygon has a particular form: in a convex approximation, the endpoints of any edge must be within $\mathcal{O}(k)$ edges in the original polygon.  Leveraging this insight permits our producing an index with guaranteed $\mathcal{O}(\log{n})$ query cost.

Computationally, the construction and representation of an arrangement of lines is somewhat expensive.  Instead, we demonstrate that the only tuples that could form part of the critical polygon are those among the $k$-skyband of $\mathcal{D}$.  So, we approximate the $k$-skyband, sort these lines based on their x-intercept (the dominating cost of the algorithm), and introduce a radial plane sweep algorithm to build the polygon index.

Our query algorithm is a binary search on the convexified polygon.  The recursion is based on the slope of the query line compared to the convex hull at the recursion point.  Once we discover the at most two intersections of the query line with the convex hull, we perform at most two $\mathcal{O}(k)$ sequential scans to derive the exact solution.  We can efficiently process a batch of queries, because we need only intersect the transformed line for each query with the same star-shaped polygon using the same index structure.

\subsubsection*{Summary of Our Contributions}
To the MRTOP problem, we make several substantial contributions:
\begin{itemize}
 \item{The definition of {\em maximal} reverse $\topk$ (\mrtop) queries, which accounts for the neglected post-processing that invariably must be done on the result set and that illustrates a major weakness in techniques that produce highly fragmented and uninterpretable result sets.} 
 \item{The first index-based approach to reverse $\topk$ queries, leveraging our thorough geometric analysis of the problem and our resultant novel insight into the problem properties.  This approach is the first to produce logarithmic query cost for reverse $\topk$ queries.}
 \item{The creation, optimisation, and publishing of comparable implementations for the work of Vlachou\ea, Wang\ea, and us, and an experimental evaluation that compares each of the three algorithms for different data distributions.}
\end{itemize}

\section{Literature}\label{sec:literature}
Monochromatic reverse $\topk$ queries are quite new, introduced by Vlachou\ea~\cite{VlachouRTOP} and an example of the growing field of Reverse Data Management~\cite{gatterbauer_rdm}.  As yet, there are two algorithms (besides ours) to efficiently answer monochromatic reverse $\topk$ queries, the one originally proposed by Vlachou\ea~\cite{VlachouRTOP}, and a subsequent algorithm proposed by Wang\ea~\cite{wangMRTOP}.  Both are linear-cost, two-dimensional algorithms.  

The algorithm of Vlachou\ea, refined in their more recent work~\cite{VlachouTKDE}, interprets each tuple as a point in Euclidean space and relies on the pareto-dominance relationships between the query point $q$ and the points in $\mathcal{D}$.  In particular it groups the points of $\mathcal{D}$ into those that dominate $q$, are dominated by $q$, and are incomparable to $q$.  The second phase is to execute a radial plane sweep over the set of points that are incomparable to $q$ in order to derive the exact solution.  Given the nature of the algorithm, we believe an interesting research direction may be to incorporate the work of Zou~and~Chen~\cite{pdg} on the Pareto-Based Dominant Graph.

Das\ea~\cite{das} describe a duality transform approach for traditional $\topk$ queries.  Wang\ea\ adapt this work into an algorithm that maintains a list of segments of $l_q$ as follows.  First they transform the query into a dual line $l_q$.  Then, for each tuple $p$ in $\mathcal{D}$, they construct the dual line $l_p$ and split $l_q$ at its intersection point with $l_p$.  For each segment of $l_q$, they maintain how many of the tuples in $\mathcal{D}$ so far have a higher rank than $q$ over that segment, discarding the segment as soon as the count exceeds $k-1$.  Their work reports an experimental order-of-magnitude improvement over that of Vlachou\ea, a claim that our experiments independently verify.

The foremost distinctions (other than techniques) of our work from these (\cite{VlachouRTOP,wangMRTOP}) is, first, that the majority of our computation is independent of $q$ (query-agnostic), and, second, the recognition that the queries more plausibly are executed in batches.  Together, these distinctions legitimise the construction of an asymptotically faster index-based approach.  We also note that Wang\ea\ propose a cubic-space rudimentary index that materialises the solution to every query, but which cannot handle the case when $q\not\in\mathcal{D}$.

Our approach in this paper, enabled by our earlier research on threshold queries~\cite{ChesterVP}, is based on arrangements, a central concept in computational geometry.  We suggest the 1995 survey by Sharir~\cite{SharirSurvey} for the interested reader.  It is particularly relevant because of its discussion of the computation of zones in an arrangement (i.e., the set of cells intersected by a surface).  The \textit{de facto} standard for representing arrangements is the \textit{doubly connected edge list}, which is detailed quite well in the introductory text of de~Berg\ea~\cite{compGeoText}.  

Our analysis of the arrangement is centred around data depth and depth contours.  Within Statistics, data depth is a well studied approach to generalising concepts like \textit{mean} to higher dimensions and a number of different depth measures were recently evaluated against each other by Hugg\ea~\cite{Tufts}.  Top-$k$ rank depth has not been studied, but is similar to arrangement depth, which is investigated by Rousseeuw and Hubert~\cite{arrangeDepth}, particularly with regard to bounding and algorithmically computing the maximum depth of a point within an arrangement.  It is important to note, however, that we deviate from these other concepts of data depth by setting the face containing the origin, rather than the external face, to have minimal depth and by not ensuring affine equivariance.  As a consequence, we cannot make the assertion about connectedness and monotonicity offered by the study of depth contours by Zuo~and~Serfling~\cite{zuo}.  

A last comment about related work pertains to literature on the traditional $\topk$ query problem, surveyed by Ilyas\ea~\cite{ilyasSurvey}.  Results in that domain cannot be straightforwardly applied here, as argued by Vlachou\ea, because non-null solutions to a monochromatic reverse $\topk$ query are infinite sets. 
\section{Preliminaries}\label{sec:preliminaries}
In this section we formally introduce the problem under study and define the scaffolding upon which this work relies.  

Throughout all this work, we assume queries are executed on a two-dimensional, numeric relation $\mathcal{D}$ which is a set of tuples $(a_1\in\mathbb{R}, a_2\in\mathbb{R})$.  Tuples can also alternatively be conceived as points $(a_1,a_2)$ in the Euclidean plane or as two-dimensional vectors $\left<a_1,a_2\right>$.  We assume $|\mathcal{D}|$ is ``large'', and that $k\in\mathbb{Z}^{+}<<|\mathcal{D}|$.

To begin, a {\em traditional, linear $\topk$ query} is a pair of weights $w_1, w_2$.  The response is the set of $k$ tuples in $\mathcal{D}$, which, when interpeted as vectors, have the largest dot product with $\left<w_1, w_2\right>$.  That is to say:

\begin{definition}\label{def:topk}
The response to a {\em traditional, linear $\topk$ query}, $\vec{w}=\left<w_1, w_2\right>$, is the set: $$TOP(\vec{w})=\{\vec{v}\in\mathcal{D}:|\{\vec{u}\in\mathcal{D}:\vec{u}\cdot\vec{w}>\vec{v}\cdot\vec{w}\}|<k\}.$$
\end{definition}

The {\em monochromatic reverse $\topk$ query}, introduced by Vlachou\ea~\cite{VlachouRTOP}, which we refer to simply as a {\em reverse $\topk$ query} in this paper, is a tuple $q=\left(q_1,q_2\right)$ not necessarily in $\mathcal{D}$.  The response is the set of traditional, linear $\topk$ queries on $\mathcal{D}\cup\{q\}$ for which $q$ is in the result set.  Formally:

\begin{definition}\label{def:rtop}
The response to a {\em reverse $\topk$ query}, $q=\left(q_1,q_2\right)$, is the set of angles 
\begin{eqnarray*}
RTOP(q)&=&\{\theta\in[0,\pi/2]:\\
&&|\{v\in\mathcal{D}:v_1+v_2\tan{\theta}>q_1+q_2\tan{\theta}\}|<k\}.\\
\end{eqnarray*}
\end{definition}

We introduce now a more user-conscious problem definition, that of a {\em maximal reverse $\topk$ query}.  The response to $q=\left(q_1, q_2\right)$ is the set of largest angular ranges for which every angle within the range is in the result of a reverse $\topk$ query, $q$.  Formally:

\begin{definition}\label{def:mrtop}
The response to a {\em maximal reverse $\topk$ query}, $q=\left(q_1,q_2\right)$, is the set of open intervals:
\begin{eqnarray*}
\mrtop(q)&=&\{\left(\theta_0\geq 0, \theta_1\leq\pi/2\right):\\
&&\theta_0\not\in RTOP(q)\wedge\\
&&\theta_1\not\in RTOP(q)\wedge\\
&&\forall\theta\in\left(\theta_0,\theta_1\right), \theta\in RTOP(q)\}.\\
\end{eqnarray*}
\end{definition}

Additionally to these problem definitions, we define here a number of concepts with which in the subsequent sections we assume the reader is familiar.  Specifically, we define here the {\em nullspace} of a vector, an {\em arrangement of lines}, the {\em $k$-skyband} of a set of points, and our novel concepts of {\em\dd} and {\em\dd\ contours}.

\begin{definition}\label{def:nullspace}
The {\em nullspace} of a vector $\vec{v}=\left<v_1,v_2\right>$ is the set of vectors orthogonal to $\vec{v}:\{\vec{u}:\vec{u}\cdot\vec{v}=0\}$.  In two dimensions, this is exactly the line $y=-\frac{v_1}{v_2}x$.  The {\em translated nullspace} of $\vec{v}$, given a positive real $\tau$, is the set of vectors $\{\vec{u}:\vec{u}\cdot\vec{v}=\tau\}$, or the line $y=\frac{\tau}{v_2}-\frac{v_1}{v_2}x$.
\end{definition}

\begin{definition}\label{def:arrangement}
An {\em arrangement} of a set of lines $\mathcal{L}$, denoted $\mathcal{A}_{\mathcal{L}}$, is a partitioning of $\mathbb{R}^2$ into cells, edges, and vertices.  Each {\em cell} is a connected component of $\mathbb{R}^2 \setminus \mathcal{L}$.  Each vertex is an intersection point of some two lines $l_1,l_2\in\mathcal{L}$.  An edge is a line segment between two vertices of $\mathcal{A}$.
\end{definition}

\begin{definition}\label{def:skyband}
Consider the set $\overline{\mathcal{S}_k}$ of tuples $(a_1, a_2)$ in $\mathcal{D}$ for which there are at least $k$ other tuples with higher values of both $a_1$ and $a_2$ (i.e., the set of points pareto-dominated by at least $k$ other points).  The {\em $k$-skyband} of $\mathcal{D}$, which we denote $\mathcal{S}_k$, is precisely the rest of $\mathcal{D}$: $\mathcal{D}\setminus\overline{\mathcal{S}_k}$.
\end{definition}

\begin{definition}\label{def:depth}
The {\em\dd} of a point $p$ within an arrangement $\mathcal{A}$, is the number of edges of $\mathcal{A}$ between $p$ and the origin.  That is to say, the depth of $p$ is the number of intersections between edges of $\mathcal{A}$ and $[\mathcal{O},p]$.  Similarly, the {\em\dd} of a cell of $\mathcal{A}$ is the \dd\ of every point within that cell.
\end{definition}

\begin{definition}\label{def:contour}
A {\em\dd\ contour} is the set of edges in an arrangement $\mathcal{A}_{\mathcal{L}}$ that have \dd\ exactly $k$.  We also refer to a {\em\dd\ contour} as the {\em $k$-polygon} of $\mathcal{L}$, because, as we show later, the contour is a closed, star-shaped polygon.
\end{definition}
\section{An Arrangement View}\label{sec:why}
The theme of this paper is to answer \mrtop\ queries with {\em logarithmic cost} by means of a data structure featuring a largely sequential data layout and inspired by geometric analysis of the problem.  In this section, we conduct that analysis and create the theoretical foundations for our correctness proof of our access methods in Section~\ref{sec:algorithm}.  

The approach taken in Vlachou\ea\ is to exploit the dominance relationship among points in $\mathcal{D}$.  The approach taken in Wang\ea\ is to compare all points in $\mathcal{D}$ to the query point $q$ in the dual space.  We take a very different approach.  We transform the dataset into an arrangement of lines and demonstrate that embedded in the arrangement is a critical polygon $\mathcal{P}_k$ which partitions $\mathbb{R}^2$ into points to include among and exclude from a \mrtop\ query result.  We show, too, that by applying the same transformation to the query to produce a line $l_q$, $\mrtop(q)$ is given precisely by the intersection of $l_q$ with the interior of $\mathcal{P}_k$.

An equally important contribution of this section is that we derive properties of $\mathcal{P}_k$ that are critical for proving later the asymptotic performance of our access method.

This section is thus divided into three subsections: the first describes the transformation of $\mathcal{D}$ into a set of $|\mathcal{D}|$ contours (Section~\ref{sec:contours}); the second derives important properties of $\mathcal{P}_k$ (Section~\ref{sec:properties}); and the third establishes the equivalence of the intersection test to the original \mrtop\ problem (Section~\ref{sec:equiv}).

\subsection{$\mathcal{A}_{\mathcal{L}}$ and Top-$k$ Rank Depth Contours}\label{sec:contours}
In this section we describe what is a \dd\ contour and how it is constructed from a relation, $\mathcal{D}$.  We illustrate how to construct the arrangement from $\mathcal{D}$ and how to interpret the arrangement as a set of contours.
First, in order to reason about $\mathcal{D}$ in terms of an arrangement, we need to represent each tuple as a line such that the relative positions of the lines with respect to a ray from the origin reflects their $\topk$ ranking.  This is precisely the property that is proferred by the translated nullspaces of each tuple, for any arbitrary real $\tau$.  

So, we convert the set of tuples (or, alternatively, vectors) $\mathcal{D}$ into a set of lines by transforming each tuple $v=(v_1, v_2)$ to the line $\overline{v}: y=\frac{\tau}{v_2}-\frac{v_1}{v_2}x$.  For a ray $r$ in any direction, we can show that:

\begin{lemma}\label{thm:bpProj}
If the depth of a point $v$ is less than the depth of a point $u$ in the direction of a ray $r$, then the rank of $v$ for a traditional, linear $\topk$ query $\vec{r}$ is better than that of $u$.
\end{lemma}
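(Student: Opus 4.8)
The plan is to turn the informal remark preceding the lemma --- that the translated nullspaces were chosen precisely so that distance along a ray from the origin mirrors the $\topk$ score --- into a short counting argument. Fix the ray $r$ emanating from the origin $\origin$ in a direction $\vec{r}$ with nonnegative coordinates (the query angle lies in $[0,\pi/2]$), so that, under the standing assumption that the tuples of $\mathcal{D}$ have positive coordinates, $\vec{r}\cdot\vec{w}>0$ for every point $w$ of interest. Recall that each tuple $w=(w_1,w_2)$ was sent to its translated nullspace at the fixed positive level $\tau$, the line $\overline{w}=\{\vec{x}:\vec{x}\cdot\vec{w}=\tau\}$; here ``the depth of $w$ in the direction of $r$'' should be read as the \dd{} (Definition~\ref{def:depth}) of the point at which $\overline{w}$ crosses $r$. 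First I would observe that $r$ meets $\overline{w}$ in the single point $p_w=t_w\vec{r}$, where $t_w=\tau/(\vec{r}\cdot\vec{w})$; since $\tau>0$, the parameter $t_w$ is a strictly decreasing function of the score $\vec{r}\cdot\vec{w}$ --- a larger score pulls the crossing point closer to $\origin$.

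The second step counts edges. By Definition~\ref{def:depth}, the \dd{} of $p_w$ is the number of edges of $\mathcal{A}_{\mathcal{L}}$ met by the segment $[\origin,p_w]$. For $z\in\mathcal{D}$, the line $\overline{z}$ meets that segment exactly when $0<t_z<t_w$; assuming general position (no three lines of $\mathcal{L}$ concurrent, and $r$ through no vertex of $\mathcal{A}_{\mathcal{L}}$), each such crossing falls in the relative interior of a distinct edge, so the edge count equals $|\{z\in\mathcal{D}:0<t_z<t_w\}|$. By the first step, $t_z<t_w$ is equivalent to $\vec{z}\cdot\vec{r}>\vec{w}\cdot\vec{r}$, so the depth of $p_w$ along $r$ equals the number of tuples of $\mathcal{D}$ that strictly outscore $w$ on the weight vector $\vec{r}$ --- which, by Definition~\ref{def:topk}, is exactly the quantity fixing $w$'s rank for the traditional, linear $\topk$ query $\vec{r}$: the fewer tuples outscore it, the better it is ranked.

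Combining the two steps settles the lemma: if the depth of $v$ is less than the depth of $u$ in the direction of $r$, then strictly fewer tuples of $\mathcal{D}$ outscore $v$ than outscore $u$ on $\vec{r}$, so $v$ has the strictly better rank for the query $\vec{r}$. The argument is unchanged whether or not $v$ and $u$ lie in $\mathcal{D}$ --- for instance when one of them is the query tuple --- since the depth is always measured against the fixed arrangement $\mathcal{A}_{\mathcal{L}}$.

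The step I expect to demand the most care is the chain of equalities in the middle, ``edges of $\mathcal{A}_{\mathcal{L}}$ met by $[\origin,p_w]$'' $=$ ``lines of $\mathcal{L}$ met by $[\origin,p_w]$'' $=$ ``tuples of $\mathcal{D}$ outscoring $w$''. The first equality relies on the general-position hypothesis (or a consistent tie-breaking rule): at a vertex two lines meet, and it is then ambiguous how many edges the segment should be charged with crossing, so $[\origin,p_w]$ must be kept away from every vertex of $\mathcal{A}_{\mathcal{L}}$. It is also worth recording that confining attention to the first quadrant is precisely what forces every dot product to be positive, and hence the correspondence between $t_w$ and the score $\vec{r}\cdot\vec{w}$ to be (strictly) monotone; a tuple with a non-positive dot product, were more general ray directions admitted, would contribute neither to the depth --- its line never crosses the positive portion of $r$ --- nor to the rank, so the correspondence would survive intact.
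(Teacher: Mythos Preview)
Your proof is correct and rests on the same core observation as the paper's: parametrising the ray as $t\vec{r}$, the crossing of $\overline{w}$ occurs at $t_w=\tau/(\vec{r}\cdot\vec{w})$, which is strictly decreasing in the score. The paper's own argument is a two-line version of your first step --- from ``closer along $r$'' it deduces $\vec{v}\cdot\vec{r}=\tau=\vec{u}\cdot c\vec{r}$ with $c>1$, hence $\vec{v}\cdot\vec{r}>\vec{u}\cdot\vec{r}$ --- and stops there, leaving the passage from ``higher score'' to ``better rank'' implicit.

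Your second step, the counting argument equating the number of edges crossed with the number of tuples strictly outscoring $w$, is not in the paper's proof of this lemma; it is essentially the content of the subsequent Corollary~\ref{thm:ddGivesRank}. So you have folded lemma and corollary into a single, more explicit argument. That is a perfectly legitimate (and arguably cleaner) route; the only cost is that the general-position caveat you flag is needed for the exact equality but not for the bare inequality the lemma states, so the paper's shorter proof gets away without it.
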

\begin{proof}
If the translated nullspace of $\vec{v}$ is closer to the origin than of $\vec{u}$ in the direction of $r$, then $\vec{v}\cdot\vec{r} =\tau= \vec{u}\cdot c\vec{r}$ for some $c>1$.  Therefore, $\vec{v}\cdot\vec{r} > \vec{u}\cdot\vec{r}$.
\qed
\end{proof}

In fact, we can make a stronger claim: the depth of a point $p$ is precisely its \dd\ for a query in the direction of $p$ if $p$ happens to correspond to a point on an edge of the arrangement.

\begin{corollary}\label{thm:ddGivesRank}${\ddf}(p) = \mathrm{rank}(\vec{p})$ for $TOP(\vec{p}).$
\end{corollary}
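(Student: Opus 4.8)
The plan is to sharpen the one-directional implication of Lemma~\ref{thm:bpProj} into an exact equality by \emph{counting} the lines that a radial segment crosses rather than merely ordering two of them. Let $p$ be a point lying on an edge of $\mathcal{A}_{\mathcal{L}}$. That edge is, by construction, a sub-segment of the transformed line $\overline{v}$ of some tuple $v\in\mathcal{D}$, and since $\overline{v}$ is the $\tau$-translated nullspace of $\vec{v}$ (Definition~\ref{def:nullspace}) we have $\vec{p}\cdot\vec{v}=\tau$; this $v$ is the tuple whose rank the corollary is really measuring, and it is the sense in which we write $\mathrm{rank}(\vec{p})$. The first step is to rewrite ${\ddf}(p)$, which by Definition~\ref{def:depth} is the number of intersections of arrangement edges with $[\origin,p]$, as the number of \emph{lines} of $\mathcal{L}$ that strictly separate $\origin$ from $p$: any two distinct lines meet in at most one point, so each line of $\mathcal{L}$ contributes at most one crossing to $[\origin,p]$, and it contributes exactly when $\origin$ and $p$ fall in its two open half-planes.

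Next I would make the separation condition algebraic. Because $\overline{u}=\{\vec{x}:\vec{x}\cdot\vec{u}=\tau\}$ with $\tau>0$, substituting the origin gives $0<\tau$, so the origin always lies on the ``$<\tau$'' side; hence $\overline{u}$ separates $\origin$ from $p$ precisely when $\vec{p}\cdot\vec{u}>\tau$. Combining this with $\vec{p}\cdot\vec{v}=\tau$ yields
$$ {\ddf}(p) = \bigl|\{u\in\mathcal{D}:\vec{u}\cdot\vec{p}>\vec{v}\cdot\vec{p}\}\bigr|, $$
which is exactly $\mathrm{rank}(\vec{v})$ for $TOP(\vec{p})$ in the sense of Definition~\ref{def:topk}. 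This simply refines the half-plane reasoning already used in the proof of Lemma~\ref{thm:bpProj}, which applied it only to order a single pair of tuples rather than to total the separating lines.

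The step I expect to be delicate is handling degeneracies at the vertices of $\mathcal{A}_{\mathcal{L}}$: if $p$ is itself a vertex it lies on several lines simultaneously, and if the interior of $[\origin,p]$ passes through a vertex then a single geometric point is shared by two arrangement edges and would be miscounted. I would dispose of these with the usual general-position assumptions (no line of $\mathcal{L}$ through $\origin$, which is automatic since $\tau>0$; no two lines of $\mathcal{L}$ crossing on $[\origin,p]$; no line of $\mathcal{L}$ parallel to $[\origin,p]$) together with the observation that the line(s) actually containing $p$ are never strictly between $\origin$ and $p$ and so are never counted --- which is precisely why the tally omits $v$ and produces a rank rather than a rank-plus-one. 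One further sentence should fix the convention that the depth of a point on an edge is read off the cell on its $\origin$-side, so that the equality is asserted for the correct one of the two cells bordering that edge.
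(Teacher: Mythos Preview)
Your argument is correct and is essentially the paper's own: the paper applies Lemma~\ref{thm:bpProj} twice to sandwich the rank between $d$ and $d$ (the $d$ nearer translated nullspaces all rank better, the farther ones all rank worse), while you reach the same equality by directly characterising ``$\overline{u}$ crosses $[\origin,p]$'' as $\vec{u}\cdot\vec{p}>\tau=\vec{v}\cdot\vec{p}$ and matching that set to Definition~\ref{def:topk}. Your write-up is more careful than the paper's about identifying the tuple $v$ whose line carries $p$ and about the general-position caveats, but the underlying idea is identical.
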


\begin{proof}
Let ${\ddf}(p)$ be $d$.  Then from the definition of \dd\ there are $d$ other baseplanes that will be sooner encountered by a ray emanating from $\origin$ in the direction of $p$.  From Lemma~\ref{thm:bpProj}, we know that each of these has a better rank than $p$, so the rank of $p$ is at best $d$.  Also, from Lemma~\ref{thm:bpProj} we can conclude that $p$ has a better rank than all those with translated nullspaces farther from the origin than that of $\vec{p}$, so the rank of $p$ is not greater than $d$, either.
\end{proof}

The $k$'th contour of an arrangement is the set of all edges at the same depth.  We wish to show that, in fact, the edges form a connected ring around the origin, thus forming a polygon.  In order for this to be true, we need to show that in any direction there is exactly one point on the contour, and that the points are all adjacent to each other.  This is the objective of the following three lemmata.

Firstly, to demonstrate connectedness, it is important that \dd\ is a monotone measure:

\begin{lemma}\label{thm:ddIsMonotone}
Top-$k$ rank depth increases monotonically with Euclidean distance from $\origin$ in any arbitrary direction.
\end{lemma}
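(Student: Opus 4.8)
The plan is to fix a direction, parametrize points along that ray by their Euclidean distance $t \geq 0$ from the origin, and argue that the $\ddf$-value is a nondecreasing step function of $t$. By Definition~\ref{def:depth}, the depth of the point at distance $t$ is the number of arrangement edges crossed by the segment from $\origin$ to that point. As $t$ increases, this segment only grows, so the set of edges it crosses can only gain elements, never lose them. Hence the count is monotone nondecreasing in $t$. The only subtlety is that crossing a vertex of the arrangement (where two or more lines meet) could in principle decrease the count if the segment were tangent to the contour there; I would handle this by the standard general-position argument (or by a limiting/perturbation argument), noting that a ray from the origin meets each line of $\mathcal{L}$ in at most one point, so along the ray each line contributes at most one crossing, and passing such a crossing strictly increases the running total by the number of lines through that point.

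First I would set up notation: let $r$ be a ray from $\origin$ in direction $\vvec$, and for $t \geq 0$ let $p(t) = t\vvec/\|\vvec\|$. Define $f(t) = \ddf(p(t))$, the number of edges of $\mathcal{A}_{\mathcal{L}}$ met by $[\origin, p(t)]$. Second, I would observe that for $t < t'$ we have $[\origin, p(t)] \subseteq [\origin, p(t')]$, so every edge-crossing counted by $f(t)$ is still counted by $f(t')$; therefore $f(t) \leq f(t')$. Third, I would convert this back to the statement of the lemma: $\ddf$ increases monotonically with Euclidean distance from $\origin$ along $r$, and since $r$ was arbitrary, the claim holds in every direction.

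The main obstacle is the degenerate case where the segment passes through an arrangement vertex — i.e., a point where several transformed lines $\overline{v}$ are concurrent, or where the ray itself passes exactly through an intersection point. There, ``number of edges crossed'' must be interpreted carefully, since a vertex lies on the boundary between cells of possibly different depth. I expect the cleanest fix is to note that each line in $\mathcal{L}$ is crossed at most once along the ray (a line and a ray from a fixed point meet in at most one point unless the ray lies in the line, a measure-zero direction we may exclude or treat separately), so the depth as a function of $t$ is exactly the cardinality of the set of lines already crossed; this set is visibly monotone in $t$, which settles monotonicity including at vertices. I would also remark that this is where our setup differs from classical arrangement depth: because we anchor minimal depth at the cell containing $\origin$ rather than the unbounded face, monotonicity outward from $\origin$ is the natural statement, and it is exactly what Lemma~\ref{thm:bpProj} already foreshadows at the level of individual pairs of points.
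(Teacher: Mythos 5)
Your proposal is correct and matches the paper's proof: the paper likewise takes $p$ on the segment $[\origin, q]$ and observes that every line crossing $[\origin, p]$ also crosses $[\origin, q]$, so $\ddf(q)\geq\ddf(p)$. Your additional care about vertices and degenerate directions is a reasonable refinement, but the core containment argument is identical.
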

\begin{proof}
Consider two points $p, q$ such that $p$ lies on the line segment $[\origin, q]$.  Every line in the arrangement that crosses $[\origin, p]$ also crosses $[\origin, q]$, so $\ddf(q)\geq \ddf(p)$.
\end{proof}

Secondly, we need to show that a cell of depth $i$ is unique in a given direction:

\begin{lemma}\label{thm:uniquePoly}
There is exactly one cell of depth $i$ in any given direction from $\origin$, for reasonably small $i$.
\end{lemma}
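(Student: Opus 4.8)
The plan is to argue by contradiction, leveraging the monotonicity of \dd\ established in Lemma~\ref{thm:ddIsMonotone}. Fix a direction, i.e., a ray $r$ emanating from $\origin$, and suppose that the ray passes through two distinct cells $C$ and $C'$ both of depth $i$, where without loss of generality $C$ is nearer to $\origin$ than $C'$. Since the cells are distinct, the ray must cross at least one edge of the arrangement as it travels from $C$ to $C'$, and by Lemma~\ref{thm:ddIsMonotone} the depth cannot decrease along $r$. Therefore every edge crossed between $C$ and $C'$ must be crossed an even number of times in the relevant subsegment (the depth goes up and must come back down to $i$), but an \emph{individual line} of the arrangement is crossed by a ray at most once. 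So along the subsegment of $r$ between $C$ and $C'$, some line $\ell$ is crossed on the way out (increasing depth) and there is no way to cross it back; hence the depth of $C'$ is strictly greater than that of $C$, contradicting $\ddf(C)=\ddf(C')=i$.

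The subtlety, and the reason for the qualifier ``for reasonably small $i$'', is that this clean argument assumes the ray actually passes through every depth value between $0$ and $i$ monotonically \emph{without} jumping over a vertex of the arrangement, and that the maximum depth attained in that direction is at least $i$. If $i$ exceeds the maximal \dd\ along $r$ — which, per the discussion of arrangement depth referenced from Rousseeuw and Hubert~\cite{arrangeDepth}, is bounded — then there is simply no cell of depth $i$ in that direction, and the ``exactly one'' claim degenerates. So I would state the lemma for $i$ below that maximal-depth threshold (which is where the later construction of $\mathcal{P}_k$ lives, since $k\ll|\mathcal{D}|$), and handle the measure-zero set of directions that hit a vertex by a standard perturbation remark: a ray through a vertex can be replaced by two nearby generic rays whose cells sandwich the desired one, or one observes that the contour is defined by edges and vertices are endpoints of exactly the edges meeting there.

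The main obstacle I anticipate is not the uniqueness-of-depth argument itself — that follows quickly from monotonicity plus ``a line is crossed at most once'' — but rather making precise what ``in a given direction'' means at vertices and at the unbounded ends of the arrangement, and nailing down the exact range of valid $i$. Concretely, I would (i) formalize ``direction'' as a ray from $\origin$ and restrict attention to rays with positive slope in the first quadrant, consistent with Definition~\ref{def:topk}; (ii) invoke Lemma~\ref{thm:ddIsMonotone} to get that $\ddf$ restricted to $r$ is a nondecreasing step function of distance from $\origin$; (iii) observe that each step-up event corresponds to crossing a distinct line, so the step function takes each value on at most one maximal interval (cell); and (iv) conclude that the cell of depth $i$, when it exists, is unique, with existence guaranteed precisely when $i$ is at most the number of lines separating $\origin$ from ``infinity'' along $r$. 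The phrase ``reasonably small $i$'' in the statement is exactly the hedge for step (iv), and I would tighten it in the proof to ``$i$ not exceeding the maximal \dd\ in direction $r$.''
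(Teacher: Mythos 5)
Your proposal is correct and follows essentially the same route as the paper: both establish uniqueness by observing that a ray reaching a farther cell of the same direction must additionally cross the boundary of the nearer cell (a line it cannot "uncross"), forcing the farther cell's depth to be at least $i+1$, and both relegate the ``reasonably small $i$'' qualifier to guaranteeing existence of a depth-$i$ cell in every direction. Your treatment is slightly more explicit about invoking monotonicity and about the edge cases at vertices, but the core argument is the same.
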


\begin{proof}
First, we show that there is at most one cell of depth $i$.
This follows from the definition of \dd.  Assume for the sake of contradiction that there are two disjoint cells, A and B, with depth $i$ in the same direction.  Without loss of generality, assume that A is nearer to $\origin$ than B.  Take some point $a\in A$.  Then, from the definition of \dd, we know that there are exactly $i$ lines crossing the line segment $[\origin, a]$.  Now consider some point $b\in B$.  Because A is nearer than B to $\origin$, clearly every line between $a$ and $\origin$ also crosses the line segment $[\origin, b]$.  So, too, must the upper boundary of A, since A and B are distinct.  But then there are at least $i+1$ lines crossing $[\origin, b]$, which contradicts that B is at depth $i$.

The assumption that $i$ is reasonably small is to guarantee that there are sufficiently many tuples in $\mathcal{D}$ that there are at least $i$ tuples to return for a traditional $\topk$ query.  This is enough to imply that there is an $i$-contour in every possible direction, so there must be at least one cell in our given direction at depth $i$, as well. 
\end{proof}

Thirdly, we can now show that, in fact, all cells of depth $i$ are connected and can thus form a contour:

\begin{corollary}\label{thm:connectedContours}All cells at the same \dd\ ($\leq k_{\mathrm{max}}$) are connected.
\end{corollary}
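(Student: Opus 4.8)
The plan is to combine the three preceding results---monotonicity (Lemma~\ref{thm:ddIsMonotone}), uniqueness of a depth-$i$ cell in any fixed direction (Lemma~\ref{thm:uniquePoly}), and Corollary~\ref{thm:ddGivesRank} which ties depth to $\topk$ rank---into a connectedness argument for the set of cells at a common depth $i \leq k_{\mathrm{max}}$. First I would observe that by Lemma~\ref{thm:uniquePoly}, for every direction $\theta$ emanating from $\origin$ there is exactly one cell at depth $i$ met by the ray in direction $\theta$; call it $C(\theta)$. This gives a well-defined surjection from the circle of directions onto the set of depth-$i$ cells. The goal is then to show this set, viewed as a subset of the plane, is connected.

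The key step is to show that as $\theta$ varies continuously, the cell $C(\theta)$ changes only by passing through a shared boundary edge or vertex, so that consecutive cells in the angular order are adjacent. I would argue this by contradiction: suppose the depth-$i$ region were disconnected, so there exist directions $\theta_1 < \theta_2$ with $C(\theta_1)$ and $C(\theta_2)$ in different connected components, and no depth-$i$ cell straddling the intermediate directions links them. Sweeping $\theta$ from $\theta_1$ to $\theta_2$, consider the first direction $\theta^*$ at which the ray leaves one component; at $\theta^*$ the ray passes through a vertex $x$ of the arrangement. I would use Lemma~\ref{thm:ddIsMonotone} to constrain the depths of the cells incident to $x$: along the ray, depth is nondecreasing, and the cells immediately before and after $x$ on the ray differ in depth by exactly one line-crossing at $x$ (generically; a vertex where $m$ lines meet contributes a jump of at most the number of those lines crossed, but the cell the ray enters just past $x$ has depth equal to the number of arrangement edges crossed on $[\origin, x]$ plus a controlled increment). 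The point is that a depth-$i$ cell must be present in the local ``fan'' of cells around $x$ on both the $\theta < \theta^*$ and $\theta > \theta^*$ sides if $i$ is attained on both sides, and those two depth-$i$ cells share the vertex $x$ (or an edge incident to it), contradicting the assumed separation.

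The main obstacle I anticipate is handling the arrangement's vertices cleanly: at a vertex where two (or more) lines cross, the depth can jump by more than one as the sweep ray crosses it, so the angular sequence of depth-$i$ cells is not literally ``one cell per elementary transition.'' I would address this either by assuming lines in $\mathcal{L}$ are in general position (no three concurrent), which the paper can invoke since $\mathcal{D}$ is a generic numeric relation, or by a perturbation argument; with that in hand, each transition of the sweep ray across a vertex changes depth by exactly one and the two cells flanking the vertex at the relevant depth are genuinely adjacent. A secondary technicality is the ``reasonably small $i$'' / $i \leq k_{\mathrm{max}}$ hypothesis, which I would carry through exactly as in Lemma~\ref{thm:uniquePoly}: it guarantees a depth-$i$ cell exists in \emph{every} direction, so the surjection above is onto the full circle and the image cannot be split by a ``missing'' direction. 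Assembling these pieces, the depth-$i$ cells form a single cyclic chain of pairwise-adjacent cells wrapping around $\origin$, which is precisely the connectedness claim and sets up the subsequent identification of the $k$-polygon as a closed, star-shaped curve.
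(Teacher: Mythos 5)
Your proposal is correct and follows essentially the same route as the paper: uniqueness of the depth-$i$ cell in each direction (Lemma~\ref{thm:uniquePoly}) combined with the observation that consecutive depth-$i$ cells in the angular sweep must be adjacent at the arrangement vertex where the two incident lines swap order. Your version is somewhat more careful (explicit general-position caveat, explicit sweep-to-first-discontinuity framing), while the appeal to radial monotonicity is an unnecessary detour --- the operative fact, as the paper notes, is simply that the top and bottom lines exchange roles at the corner so the depth is preserved across it.
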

\begin{proof}
This follows from Lemma~\ref{thm:uniquePoly}, which implies that there are no discontinuities in the contour in any given direction.  Observe, too, that for any cell there must be an adjacent cell with the same depth at every corner.  The corners correspond to directions in which the incident translated nullspaces reverse order.  So, since the top translated nullspace becomes a bottom translated nullspace and vice versa, the depth does not change.\footnote{Strictly speaking, the vertex/corner itself is a discontinuity, as there is no point in that direction with exactly the right number of crossing line segments, but this is infinitesimal in size and we ignore the issue because we return open intervals anyway.}
\end{proof}

This is enough to establish that the $k$'th contour of the arrangement is precisely a star-shaped polygon:

\begin{theorem}\label{thm:polygonStarShaped}
A contour is a star-shaped polygon.
\end{theorem}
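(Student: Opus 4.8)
The plan is to assemble the three preceding results into a definition-checking argument that the $k$'th contour satisfies both halves of ``star-shaped polygon'': it is a closed polygon (a connected cycle of edges), and it is star-shaped with respect to the origin (the entire contour is visible from $\origin$). Recall that a set $S$ is star-shaped if there exists a point $c$ such that for every $s \in S$ the segment $[c,s]$ is contained in the region bounded by $S$; here the natural candidate for $c$ is the origin, which by construction lies in the unique cell of depth $0$.

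First I would argue that the contour is a closed polygonal curve. By Corollary~\ref{thm:connectedContours}, all cells of depth $k$ are connected, and by Lemma~\ref{thm:uniquePoly} there is exactly one such cell in every direction from $\origin$. Consequently, as we rotate a ray around $\origin$ through a full $2\pi$ (or, in the first quadrant, the relevant angular sweep), the outer boundary of the depth-$k$ region traces a single curve that returns to its starting point --- there are no gaps (Lemma~\ref{thm:uniquePoly}) and no branching (the region is connected and each direction meets it in one cell). Since the arrangement is a subdivision of the plane by finitely many lines, this boundary curve is piecewise linear, i.e.\ a finite sequence of edges of $\mathcal{A}_{\mathcal{L}}$ joined at vertices, which is exactly a polygon. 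I would also note that the footnote to Corollary~\ref{thm:connectedContours} already disposes of the apparent discontinuity at each vertex: the corner directions are infinitesimal and handled by returning open intervals, so they do not break closure.

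Next I would establish star-shapedness with respect to $\origin$. Let $p$ be any point on the contour, so $\ddf(p) = k$. I claim the half-open segment $[\origin, p)$ lies strictly inside the polygon, i.e.\ every point $x$ on it has $\ddf(x) < k$. This is immediate from monotonicity: by Lemma~\ref{thm:ddIsMonotone}, $\ddf$ is nondecreasing along the ray from $\origin$ through $p$, and since $\ddf(\origin) = 0 < k$ (as $k \geq 1$) while $\ddf(p) = k$, every $x$ strictly between them has $0 \le \ddf(x) \le k$; and if some such $x$ had $\ddf(x) = k$ it would, together with $p$, violate Lemma~\ref{thm:uniquePoly} (two points of the depth-$k$ cell cannot be separated, yet an edge of depth $< k$ must have been crossed between $\origin$ and $p$ to eventually reach depth $k$ --- more carefully, $p$ is on the \emph{outer} boundary of the depth-$k$ cell, so points closer to $\origin$ than $p$ on that ray are either in the same cell with depth $k$, contradicting uniqueness of the point on the contour in that direction, or in a shallower cell). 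Hence $[\origin, p]$ meets the contour only at $p$, which is precisely the defining property of star-shapedness centred at $\origin$.

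The main obstacle I anticipate is the rigorous handling of the ``outer boundary of the depth-$k$ cell'' and the vertices: one must be careful that the contour as defined (the set of edges at depth exactly $k$) genuinely coincides with a single simple closed curve and not, say, a figure-eight or a curve with the origin-side and far-side boundaries of the cell both included. The clean way around this is to define the contour directionally --- for each direction $\theta$, take the unique maximal-distance point whose depth is $k$ (equivalently, the boundary between the depth-$k$ and depth-$(k{+}1)$ cells) --- and then invoke Lemma~\ref{thm:uniquePoly} and Corollary~\ref{thm:connectedContours} to see this directional definition yields a connected, gap-free, non-self-intersecting polygon. Everything else is a routine appeal to the monotonicity lemma. \qed
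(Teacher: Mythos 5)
Your proposal is correct and follows essentially the same route as the paper: closure and existence in every direction come from Corollary~\ref{thm:connectedContours} and Lemma~\ref{thm:uniquePoly}, and star-shapedness with respect to $\origin$ is obtained by showing that a violation of visibility would force two cells of the same depth in one direction, contradicting Lemma~\ref{thm:uniquePoly}. Your version is more explicit (e.g.\ invoking Lemma~\ref{thm:ddIsMonotone} and worrying about the outer-boundary/vertex subtleties), but the underlying argument is the paper's.
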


\begin{proof}
First, we know that the contour is connected and exists in every direction from $\origin$.  
Also, every point inside the polygon is visible from $\origin$, for if there were some point $p$ that were not visible, then an edge of the boundary would cross $[\origin, p]$.  However, this would imply that there are two cells at the same depth in the direction of $p$ from $\origin$, contradicting Lemma~\ref{thm:uniquePoly}.  
\end{proof}

Theorem~\ref{thm:polygonStarShaped} is quite important.  It establishes that we can represent $\mathcal{D}$ as a set of polygons with a unique depth $i$, each of which itself encodes the $i$'th ranked tuple for any possible traditional, linear $\topk$ query.  If there is only one value $k$ of interest, then the entire dataset can be represented just by one polygon.  In this next subsection, we show properties of the $k$-polygon, including bounds on its size, and in the following subsection describe how to use it in order to address the main question of this paper, \mrtop\ queries.

\subsection{Properties of $\mathcal{P}_k$}\label{sec:properties}
In order to be able to use $\mathcal{P}_k$ as a data structure, we have to evaluate properties of the polygon in order to evaluate asymptotic performance.  As we will detail in the next section, our data structure will be a representation of $\mathcal{P}_k$, so the number of edges and vertices in the polygon influences our access time.  

Also, to improve performance, our data structure includes a convex approximation of $\mathcal{P}_k$ (specifically the convex hull), and understanding the implications of this approximation is also important.  

Thirdly, we approximate the dataset $\mathcal{D}$ by $\mathcal{S}_k$, so understanding the implications of this approximation is clearly important, as well.

Gathering this understanding is the intent of these next three lemma.  Specifically, they answer these three questions in order:

\begin{lemma}\label{thm:arrangeBound}
An arrangement of $m$ lines can produce contours at \dd\ $i$ with no more than $\mathcal{O}(m)$ edges.
\end{lemma}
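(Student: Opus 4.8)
The plan is to bound the number of edges on the depth-$i$ contour (the $i$-polygon) by the complexity of a single level in an arrangement of lines, which is the classical \emph{$k$-level} (or \emph{$\le k$-level}) problem. The key observation is that the \dd\ contour, as constructed here, is essentially the $i$-th level of the arrangement $\mathcal{A}_{\mathcal{L}}$ when the lines are ordered by the position at which a ray from $\origin$ crosses them. So first I would make precise that an edge $e$ of the contour lies at \dd\ exactly $i$, meaning that along any ray from $\origin$ through a point of $e$, there are exactly $i$ lines of $\mathcal{L}$ strictly between $\origin$ and $e$. This identifies the contour with the boundary between the union of the $i$ innermost cells and the rest, i.e.\ with a single level of the arrangement.

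Next I would count the vertices of the contour. A vertex of the $i$-polygon occurs only where two lines of $\mathcal{L}$ cross and, at that crossing, the two lines swap their order along the sweeping ray while both being the ``$i$-th closest'' line at that direction (this is exactly the corner behaviour described in the proof of Corollary~\ref{thm:connectedContours}). Each such event is an intersection point of two lines of $\mathcal{L}$, and each pair of lines intersects at most once, so there are at most $\binom{m}{2}$ candidate vertices in total across \emph{all} contours. The crux is to argue that only $\mathcal{O}(m)$ of these can lie on one fixed contour. I would do this with a rotational sweep argument: imagine rotating a ray from $\origin$ through a full turn and tracking which line is the $i$-th one encountered. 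This is a sequence of ``slots'', and a vertex of the $i$-contour corresponds to a change in the occupant of slot $i$. Because the lines are in general position, as we rotate, each consecutive pair of lines (in the order along the ray) swaps at most a bounded number of times — in fact, since any two lines cross at most once, the permutation of lines along the ray undergoes at most $\binom{m}{2}$ adjacent transpositions over the full rotation, but only a transposition involving the current slot-$i$ and slot-$(i+1)$ lines (or slot-$(i-1)$ and slot-$i$) changes the $i$-contour. A cleaner route: observe that at each contour vertex, one line \emph{enters} the set of the $i$ closest lines and another \emph{leaves} it; charge the vertex to the line that leaves. Since \dd\ is monotone (Lemma~\ref{thm:ddIsMonotone}) and each line, once it exits the ``inner $i$'' set in a given angular range, can only re-enter by crossing back — and two lines cross only once — a line can be charged only $\mathcal{O}(1)$ times, giving $\mathcal{O}(m)$ vertices and hence $\mathcal{O}(m)$ edges.

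I expect the main obstacle to be making the charging argument fully rigorous, because the naive bound on a single level of a line arrangement is \emph{not} linear in general — the famous $k$-level problem has superlinear worst-case complexity (the best known upper bound is roughly $\mathcal{O}(m \, k^{1/3})$). So a purely generic ``it's one level of an arrangement'' argument would only give a superlinear bound, which is weaker than what the lemma claims. The resolution must exploit something special to \emph{this} setting: namely that all lines arise as translated nullspaces $\overline{v}: y = \tau/v_2 - (v_1/v_2)x$ of points with positive coordinates, so they all have negative slope and positive intercepts, and the sweeping ray emanates from the origin $\origin$ into the positive quadrant. Under these restrictions the relevant ``level'' is monotone and behaves like an \emph{upper envelope of a well-ordered family}, for which the complexity genuinely is linear. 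Concretely, I would argue that restricted to the angular range $[0,\pi/2]$, the order of the lines along the ray changes in a structured way (each line is ``active'' — i.e., can be the $i$-th closest — over a single contiguous angular interval), so the contour is an $x$-monotone-like curve whose breakpoints are bounded by $\mathcal{O}(m)$ via a Davenport--Schinzel-style argument of order~1 or~2. Pinning down exactly which structural property (single contiguous active interval per line, or order-1 Davenport--Schinzel sequence) does the job, and verifying it holds for this line family, is where the real work lies; everything else is bookkeeping.
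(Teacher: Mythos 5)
Your proposal correctly locates the crux but does not close it. The paper's argument is a per-line charging argument: by Theorem~\ref{thm:ansIsIntersection}, the portion of a line $\overline{v}$ lying on the $k$'th contour is exactly the answer to a \mrtop{} query for $v$ on $\mathcal{D}\setminus\{v\}$, and Proposition~\ref{thm:twoSegs} asserts that this answer consists of at most two angular intervals; hence each of the $m$ lines contributes at most two edges, for at most $2m$ in total. Your ``cleaner route'' aims at the same per-line constant, but the justification offered does not hold: the fact that two lines cross at most once only shows that the depth of a fixed line, as a function of the sweep angle, is a step function with at most $m-1$ unit jumps; it does not bound by a constant the number of maximal intervals on which that function equals $i$, so ``a line can be charged only $\mathcal{O}(1)$ times'' does not follow. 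Nor is a line the $i$-th closest over a single contiguous interval once $i>1$. You acknowledge this yourself (``where the real work lies''), so as written the proposal is a plan rather than a proof.

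That said, your observation that a generic single-level argument cannot give a linear bound --- because the $k$-level of $m$ lines has superlinear worst-case complexity --- is exactly right, and it cuts deeper than you may realize. After the substitution $t=\tan\theta$, the radial order of the translated nullspaces along a ray from \origin{} coincides with the vertical order of the lines $t\mapsto v_1+v_2t$ over $t\ge 0$, and the restrictions you hope to exploit (positive coefficients, positive quadrant) can be undone by affine shifts that preserve all levels, so this family is not structurally easier than the general $k$-level problem. The entire burden therefore rests on Proposition~\ref{thm:twoSegs}, which the paper states without proof and which, for large $i$, is in tension with known superlinear $k$-level lower bounds. So the honest conclusion is: the missing ingredient in your argument is precisely the at-most-two-intervals property, it is also the load-bearing (and unproven) ingredient in the paper's argument, and any complete proof must either establish it directly or retreat to the regime of small $i$, where the $\mathcal{O}(mi)$ bound on the $\le i$-level already suffices.
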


\begin{proof}
Note from Theorem~\ref{thm:ansIsIntersection} that for each line $l$ derived from a tuple $v$, the edges it contributes to the $k$'th contour are precisely the answer to a \mrtop\ query of $v$ on $\mathcal{D}\setminus\{v\}$.  From Proposition~\ref{thm:twoSegs}, we know this can consist of at most two disjoint angular intervals; therefore, $l$ can contribute at most two edges to the $k$'th contour.
\end{proof}

\begin{lemma}\label{thm:concavityBound}
A concave region between vertices of the convex hull of the $k$'th contour's upper boundary can have at most $2k-1$ vertices.
\end{lemma}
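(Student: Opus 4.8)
The plan is to bound the number of distinct arrangement lines that can support an edge of a single concave region of $\mathcal{P}_k$, charging each such line to one of the two convex-hull vertices that delimit the region. Fix a concave region $C$ of the $k$-contour's upper boundary, delimited by consecutive convex-hull vertices $A$ and $B$, and let $L$ be the line supporting the hull edge $\overline{AB}$. Since $\mathcal{P}_k$ is star-shaped about $\origin$ (Theorem~\ref{thm:polygonStarShaped}) and $\origin$ lies interior to its convex hull, for every direction strictly between those of $A$ and $B$ the point of $C$ in that direction is radially nearer to $\origin$ than the corresponding point of $\overline{AB}$; hence the relative interior of $C$ lies strictly on the $\origin$-side of $L$, and only $A$ and $B$ of $C$ lie on $L$ itself. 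Being concave, $C$ has monotonically varying edge slopes, so no arrangement line supports two of its edges; I will write $\mu_1,\dots,\mu_s$ for the supporting lines in chain order, so that $C$ has exactly $s+1$ vertices ($A$, $B$, and the $s-1$ reflex vertices between them).

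The crux is the claim that each $\mu_i$ passes strictly between $\origin$ and $A$ in $A$'s direction, or strictly between $\origin$ and $B$ in $B$'s direction. Let $(\theta_A,\theta_B)$ be the angular range spanned by $C$ (equal to that spanned by $\overline{AB}$, and of width below $\pi$ because $\origin$ is interior to the hull). Each $\mu_i$ is parallel to $L$ or meets it once, at a single direction $\phi$ from $\origin$; in either case $\mu_i$ stays on one fixed side of $L$ throughout each of the arcs $(\theta_A,\phi)$ and $(\phi,\theta_B)$, and it is on the $\origin$-side of $L$ throughout the nonempty sub-arc over which it carries an edge of $C$. Therefore $\mu_i$ lies on the $\origin$-side of $L$ on at least one of the one-sided neighbourhoods $(\theta_A,\theta_A{+}\varepsilon)$ or $(\theta_B{-}\varepsilon,\theta_B)$; under general position no third line passes through $A$ or $B$, so for $1<i<s$ the line $\mu_i$ avoids both, and continuity forces its point at $\theta_A$ (respectively $\theta_B$) to lie strictly between $\origin$ and $A$ (respectively $B$). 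The two extreme lines are handled directly: $\mu_1$ passes through $A$ and lies on the $\origin$-side of $L$ just past $\theta_A$, hence is strictly inside $\mathcal{P}_k$ at direction $\theta_B$, and symmetrically $\mu_s$ is strictly inside at $\theta_A$.

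To finish, I would observe that by Definitions~\ref{def:depth} and~\ref{def:contour} together with Lemma~\ref{thm:ddIsMonotone}, every point strictly inside the $k$-contour in a fixed direction is crossed by at most $k-1$ arrangement lines; applying this just inside $A$ --- valid because, by Lemma~\ref{thm:uniquePoly}, the ray through $A$ meets the contour only at $A$ --- shows that at most $k-1$ arrangement lines pass strictly between $\origin$ and $A$ in $A$'s direction. Call that set $H_A$, so $|H_A|\le k-1$, and define $H_B$ analogously with $|H_B|\le k-1$. The claim above gives $\{\mu_1,\dots,\mu_s\}\subseteq H_A\cup H_B$, so $s\le|H_A|+|H_B|\le 2k-2$, and hence $C$ has at most $2k-1$ vertices.

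The step I expect to be the main obstacle is the side-switching argument of the second paragraph: one must confirm that, as the direction sweeps $(\theta_A,\theta_B)$, each $\mu_i$ really does change sides of $L$ at most once and stays met by the ray from $\origin$ over the relevant sub-arc --- this is exactly where the width being below $\pi$ and the interior position of $\origin$ enter --- and one must dispatch the degenerate configurations ($\mu_i$ parallel to $L$, three lines concurrent at $A$ or $B$, and the endpoints $\mu_1$ and $\mu_s$) as indicated above. The remaining pieces --- star-shapedness, monotonicity of edge slopes along a concave chain, and the translation of ``radially nearer than a $k$-contour point'' into ``crossed by at most $k-1$ arrangement lines'' --- are immediate from results already in hand.
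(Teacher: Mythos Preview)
Your argument shares its core idea with the paper's: any line supporting an edge of the concave pocket must pass between $\origin$ and one of the two delimiting hull vertices, so only $O(k)$ lines are involved. Your chord-based side-switching argument relative to $L=\overline{AB}$ is in fact a more careful justification of this than the paper offers; the paper simply asserts in one sentence that ``any line that passes neither under $v_i$ nor under $v_j$ \ldots\ cannot pass through the concave region's face,'' notes that hull vertices lie at depth $k-1$, and concludes that the face is defined by at most $2k$ lines.

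The genuine gap in your version is the passage from lines to vertices. You write ``Being concave, $C$ has monotonically varying edge slopes, so no arrangement line supports two of its edges,'' and from this infer that $s$ supporting lines give exactly $s+1$ vertices. But ``concave region'' in this lemma means a \emph{pocket} of the convex hull---the portion of $\partial\mathcal{P}_k$ lying below one hull edge---not a chain whose vertices are all reflex. A pocket of a star-shaped polygon may contain internal convex vertices (ones simply not extreme enough to be hull vertices), so the edge slopes along $C$ need not vary monotonically, and a single $\mu_i$ can support more than one edge of $C$. Your containment $\{\mu_1,\dots,\mu_s\}\subseteq H_A\cup H_B$ survives and yields at most $2k-2$ \emph{distinct} supporting lines, but that alone does not give $2k-1$ vertices. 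The paper sidesteps this by treating those $\le 2k$ lines as a sub-arrangement and invoking Lemma~\ref{thm:arrangeBound} on it rather than counting edges directly; you could close your gap the same way, though at the cost of the sharp constant (the paper's own invocation of Lemma~\ref{thm:arrangeBound} also yields only an $O(k)$ bound, not exactly $2k-1$).
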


\begin{proof}
Notice that vertices of the convex hull of the contour's upper boundary are themselves at depth $k-1$.  Consider two such vertices, $v_i$, $v_j$, delimiting a concave region.  Any line that passes neither under $v_i$ nor under $v_j$ and is orthogonal to some non-zero vector from $\origin$ cannot pass through the concave region's face, so the face is defined by at most $2k$ lines.  This is, in fact, an arrangement, so Lemma~\ref{thm:arrangeBound} implies the bound on the number of cells in that arrangement that could possibly be at depth $k$ and thus contribute a vertex to the concave region's boundary.
\end{proof}

\begin{lemma}
Only tuples in $\mathcal{S}_k$ can form part of the $k$-polygon.
\end{lemma}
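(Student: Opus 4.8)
The plan is to show the contrapositive: if a tuple $v \in \mathcal{D}$ lies outside the $k$-skyband (i.e., $v \in \overline{\mathcal{S}_k}$), then the line $\overline{v}$ it induces contributes no edge to the $k$-polygon. By Definition~\ref{def:skyband}, $v \in \overline{\mathcal{S}_k}$ means there are at least $k$ tuples in $\mathcal{D}$ that pareto-dominate $v$, i.e., at least $k$ tuples $u$ with $u_1 > v_1$ and $u_2 > v_2$. The key observation I would establish is that pareto-dominance at the tuple level translates, under the translated-nullspace transform $v \mapsto \overline{v}$, into the line $\overline{u}$ lying strictly between $\overline{v}$ and the origin $\origin$ in \emph{every} direction in $[0,\pi/2]$. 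Concretely, if $u_1 > v_1$ and $u_2 > v_2$, then for every $\theta \in [0, \pi/2]$ we have $u_1 + u_2 \tan\theta > v_1 + v_2 \tan\theta > 0$, so by the same reasoning as in Lemma~\ref{thm:bpProj} the translated nullspace of $u$ is closer to $\origin$ than that of $v$ along the ray at angle $\theta$. Hence $\overline{u}$ separates $\overline{v}$ from $\origin$ on that entire ray.

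Next I would combine this with the depth machinery. Fix any direction (ray $r$ at angle $\theta \in [0,\pi/2]$) and let $p$ be the point where $\overline{v}$ crosses $r$. Every one of the $\geq k$ dominating lines $\overline{u}$ crosses the segment $[\origin, p]$, by the separation just argued. Therefore, by Definition~\ref{def:depth}, $\ddf(p) \geq k$, and in fact $\ddf(p) \geq k$ holds no matter which point of $\overline{v}$ (restricted to the relevant quadrant) we pick. Since a point of $\overline{v}$ lies on the $k$-polygon only if its \dd\ equals exactly $k$, and an edge of $\overline{v}$ on the contour would require $\ddf = k$ along that whole edge, we need to rule out $\ddf(p) = k$ for points of $\overline{v}$. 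The strictness of the dominance inequalities gives that the $k$ dominating lines cross $[\origin,p]$ \emph{before} reaching $\overline{v}$, and $\overline{v}$ itself is one more crossing when we extend just past $p$; more carefully, a point $p'$ on $\overline{v}$ itself has the $k$ dominating lines strictly between it and $\origin$, so $\ddf(p') \geq k$, and any point just beyond $p'$ along $r$ has depth $\geq k+1$ (it has crossed $\overline{v}$ too). Thus $\overline{v}$ sits at depth $\geq k$ but only as the ``outer side'' — it bounds cells of depth $\geq k$ on the origin side and $\geq k+1$ beyond — so it can only ever be part of the $k'$-polygon for some $k' \geq k+1$, never the $k$-polygon. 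Wait — I should be careful here: what we actually want is that $\overline{v}$ does not contribute an edge at depth exactly $k$; the argument shows the edge of $\overline{v}$ facing $\origin$ already has $\geq k$ other lines nearer, so it belongs to a contour of index $\geq k$, and the strict inequalities (no ties, assuming general position) push this to index $\geq k+1$, excluding it from the $k$-polygon.

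The main obstacle I anticipate is handling the boundary/degenerate cases cleanly: ties in the dominance relation (non-strict dominance, or tuples sharing a coordinate), the behaviour exactly at $\theta = 0$ and $\theta = \pi/2$ where one coordinate is irrelevant, and the general-position assumption implicit in treating the arrangement's edges and vertices. If $v$ is only weakly dominated (equality in one coordinate) the separation could degenerate to coincident lines at an endpoint angle; I would argue this affects only a measure-zero set of directions and is absorbed by the paper's stated convention of returning open intervals and ignoring vertices (cf. the footnote after Corollary~\ref{thm:connectedContours}). A secondary, smaller obstacle is making rigorous the step ``depth $\geq k$ on $\overline{v}$ plus one more crossing just beyond gives depth $\geq k+1$'': this is really the statement that $\overline{v}$ is an edge of the arrangement separating a depth-$d$ cell from a depth-$(d{+}1)$ cell with $d \geq k$, which follows from monotonicity (Lemma~\ref{thm:ddIsMonotone}) together with the count of dominating lines, but it needs to be phrased so that it genuinely excludes $d = k-1$ (handled, since there are $\geq k$ strictly-nearer lines) rather than merely $d < k$.
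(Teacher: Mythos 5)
Your proposal is correct and follows essentially the same route as the paper: the paper's proof also argues that a tuple pareto-dominated by at least $k$ others is outranked by those $k$ tuples for \emph{every} query direction and therefore can never attain the critical rank/depth that defines the $k$-polygon. Your version merely unpacks the paper's phrase ``better ranked'' into the explicit geometric fact that each dominator's translated nullspace lies strictly between $\overline{v}$ and the origin throughout the first quadrant, and is rather more careful than the paper about the off-by-one between ``depth at least $k$'' and ``depth exactly $k$.''
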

\begin{proof}
Tuples that are not among $\mathcal{S}_k$ are, by definition, among $\overline{\mathcal{S}_k}$.  However, the tuples of $\overline{\mathcal{S}_k}$ are those dominated by at least $k$ other tuples.  In order words, regardless of the traditional, linear $\topk$ query issued, there are at least $k$ better ranked tuples.  Consequently, the $k$-polygon, which encodes the $k$'th ranked tuples for all possible traditional, linear $\topk$ queries, clearly does not contain the tuples of $\overline{\mathcal{S}_k}$ in any direction.
\end{proof}

\begin{figure*}[bt]
 \begin{centering}
  \includegraphics[scale=1.25, clip=true, trim=0 8in 6cm 0]{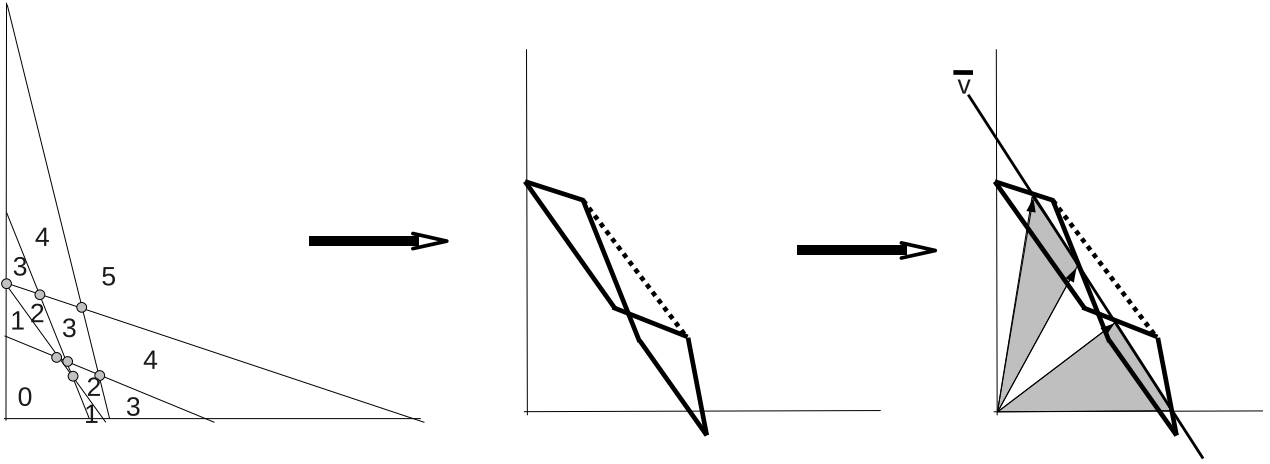}
  \caption{\label{fig:contours}An arrangement labelled with \dd; the $2$'nd contour, zoomed in, with its convex hull displayed by the dashed line; and the (coloured gray) result for the reverse top-2 query for the vector $\vvec=\left<5, 5/2\right>$ (whose baseplane is $y=2(\tau-4x)/5$).}
 \end{centering}
\end{figure*}

\subsection{A Transformed \mrtop\ Query}\label{sec:equiv}
In the previous subsections we have demonstrated that a star-shaped polygon (the $k$-polygon) can encode the $k$'th best ranked tuple for all query directions.  In this section, we demonstrate how to use the $k$-polygon for \mrtop\ queries.

First, recall that the arrangement of lines was produced by transforming each tuple in $\mathcal{D}$ to its translated nullspace, given some fixed but arbitrary $\tau$.  Here, we prove that applying the same transformation to a query $q$ to produce a line $l_q$ and intersecting $l_q$ with the interior of $\mathcal{P}_k$ yields the directions in which $q$ is among the result set of traditional, linear $\topk$ queries:

\begin{theorem}\label{thm:ansIsIntersection}
The response to a \mrtop\ query, given query vector $\vec{q}=\left<q_1,q_2\right>$, is the component of $\overline{q}: y=\frac{\tau}{q_2}-\frac{q_1}{q_2}x$ which intersects the interior of $\mathcal{P}_k$.
\end{theorem}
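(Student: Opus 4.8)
The plan is to reduce the claim to the characterization of $RTOP(q)$ in Definition~\ref{def:rtop} by relating membership of $q$ in a traditional $\topk$ result to the depth of the point where the line $\overline{q}$ meets the ray in a given direction. Fix an angle $\theta\in[0,\pi/2]$ and let $r_\theta$ be the ray from $\origin$ in the direction $\vec{w}=\langle 1,\tan\theta\rangle$. The translated-nullspace transform was chosen precisely so that the point at which $\overline{v}$ crosses $r_\theta$ records the value $\vec{v}\cdot\vec{w}$: the farther that crossing is from $\origin$, the larger $\vec{v}\cdot\vec{w}$, by the computation in Lemma~\ref{thm:bpProj}. So I would first argue that, in the direction $\theta$, the point of $\overline{q}$ on $r_\theta$ lies strictly inside $\mathcal{P}_k$ if and only if fewer than $k$ of the lines $\{\overline{v}:v\in\mathcal{D}\}$ cross $[\origin,\overline{q}\cap r_\theta]$ --- equivalently (Definition~\ref{def:depth}, Lemma~\ref{thm:ddIsMonotone}) the depth of that point is $<k$, i.e.\ it lies strictly nearer $\origin$ than the $k$-polygon in that direction.

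Second, I would convert "depth $<k$" into the rank condition of Definition~\ref{def:rtop}. By Lemma~\ref{thm:bpProj}, each line crossing $[\origin,\overline{q}\cap r_\theta]$ corresponds to a tuple $v$ with $v_1+v_2\tan\theta > q_1+q_2\tan\theta$, and conversely every such $v$ contributes a crossing; hence the depth of $\overline{q}\cap r_\theta$ equals $|\{v\in\mathcal{D}: v_1+v_2\tan\theta > q_1+q_2\tan\theta\}|$. Therefore that depth is $<k$ exactly when $\theta\in RTOP(q)$. Combining with the first step: $\theta\in RTOP(q)$ iff $\overline{q}$ meets the open interior of $\mathcal{P}_k$ in direction $\theta$. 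Since $\mathcal{P}_k$ is a star-shaped polygon with $\origin$ in its kernel (Theorem~\ref{thm:polygonStarShaped}), "the interior of $\mathcal{P}_k$ in direction $\theta$" is a single radial segment, so the set of such $\theta$ is exactly the angular spread of the connected component of $\overline{q}\cap \mathrm{int}(\mathcal{P}_k)$; this matches the maximality/open-interval bookkeeping of Definition~\ref{def:mrtop}, the endpoint angles being precisely those where $\overline{q}$ touches the boundary of $\mathcal{P}_k$ (depth exactly $k$) and hence lie outside $RTOP(q)$.

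The step I expect to be the main obstacle is the boundary/tie-breaking bookkeeping: making "crosses $[\origin,\overline{q}\cap r_\theta]$" precise when $\overline{q}$ passes exactly through a vertex of the arrangement, or when $q$ itself ties some $v\in\mathcal{D}$ in a particular direction, so that the count of crossing lines versus the strict inequality $\vec u\cdot\vec w > \vec v\cdot\vec w$ in Definition~\ref{def:topk} line up correctly. Corollary~\ref{thm:ddGivesRank} already handles the case where $\overline{q}$-on-$r_\theta$ sits on an edge of the arrangement, and Corollary~\ref{thm:connectedContours} together with the footnote there tells us vertices form a measure-zero set of directions that we may discard because $\mrtop$ returns open intervals; I would invoke exactly these to dispatch the degenerate directions. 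A secondary, more routine point is checking that $\overline{q}\cap\mathrm{int}(\mathcal{P}_k)$ really is a single connected segment (equivalently, that a line meets a star-shaped polygon's interior in a connected set in each radial direction) --- this follows directly from star-shapedness and uniqueness of the depth-$i$ cell in each direction (Lemma~\ref{thm:uniquePoly}), so it needs only a one-line appeal.
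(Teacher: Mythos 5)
Your argument is correct and is essentially the paper's own proof, unpacked: the paper likewise combines the depth--rank correspondence (Corollary~\ref{thm:ddGivesRank}, which rests on Lemma~\ref{thm:bpProj}) with monotonicity of depth (Lemma~\ref{thm:ddIsMonotone}) to conclude that the portions of $\overline{q}$ outside the $k$'th contour are exactly the directions in which $q$ ranks worse than $k$, and your angle-by-angle counting of crossing lines is just a more explicit rendering of that step. The one slip is your closing ``secondary point'' that $\overline{q}\cap\mathrm{int}(\mathcal{P}_k)$ is a single connected segment: star-shapedness does not force this, and the paper's Proposition~\ref{thm:twoSegs} explicitly allows up to two components --- but nothing in your main argument actually depends on connectedness, since the theorem (and Definition~\ref{def:mrtop}) permits the answer to be a union of intervals.
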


\begin{proof}
Recall from Theorem~\ref{thm:ddGivesRank} that the $k$'th contour corresponds exactly to the vectors of rank $k$ and also from Lemma~\ref{thm:ddIsMonotone} that the contours increase in rank monotonically.  Therefore, if we constructed a new arrangement which also contained $\overline{q}$, the components of $\overline{q}$ which lay outside the $k$'th contour would be directions in which the rank of $q$ is greater than $k$.  The inverse of this is the solution to the \mrtop\ query.
\end{proof}

Consequently, it suffices to develop algorithms for solving the problem of identifying the segments of $\overline{q}$ which lie inside the $k$'th \dd\ contour in order to solve the \mrtop\ problem.  An illustration of this is provided in Figure~\ref{fig:contours}.

A final note regarding the properties of $\mathcal{P}_k$ is that:

\begin{proposition}\label{thm:twoSegs}
The result in two dimensions of a \mrtop\ query consists of at most two continuous intervals.
\end{proposition}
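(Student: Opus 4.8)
The plan is to reason directly about the line $\overline{q}$ as it intersects the star-shaped $k$-polygon $\mathcal{P}_k$, using Theorem~\ref{thm:ansIsIntersection} to reduce the claim to a purely geometric statement: the intersection of a single line with the interior of $\mathcal{P}_k$ consists of at most two open segments, and therefore $\mrtop(q)$ (read off as angular intervals) consists of at most two intervals. First I would recall that $\mathcal{P}_k$ is star-shaped with respect to the origin $\origin$ (Theorem~\ref{thm:polygonStarShaped}), so every ray from $\origin$ crosses its boundary exactly once. The key observation is that the line $\overline{q}$ is itself a translated nullspace, i.e. of the form $y = \tau/q_2 - (q_1/q_2)x$ with $q_1, q_2 \geq 0$; in the first quadrant this is a line of negative (or zero) slope that, as $\theta$ ranges over $[0, \pi/2]$, is hit by the rotating ray from $\origin$ in a monotone fashion — the intersection point of the ray at angle $\theta$ with $\overline{q}$ moves monotonically along $\overline{q}$ as $\theta$ increases. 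This gives a bijection between the relevant portion of $\overline{q}$ and the angular parameter $\theta$.

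Next I would use monotonicity of \dd\ along rays (Lemma~\ref{thm:ddIsMonotone}): along the ray in direction $\theta$, depth is non-decreasing with Euclidean distance from $\origin$, and $q$ lies inside $\mathcal{P}_k$ in direction $\theta$ exactly when its depth there is less than $k$, i.e. when the point where the ray meets $\overline{q}$ lies strictly inside (nearer to $\origin$ than) the boundary of $\mathcal{P}_k$. So the set of ``good'' angles is $\{\theta : \overline{q} \cap r_\theta \text{ is interior to } \mathcal{P}_k\}$. Parametrising $\overline{q}$ by arclength and using the monotone correspondence with $\theta$, the good set corresponds to the subset of $\overline{q}$ lying inside $\mathcal{P}_k$. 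Now I would appeal to convexity of the boundary of $\mathcal{P}_k$ restricted to each ``side'': the crucial structural fact is that $\overline{q}$ — being a line of non-positive slope through the first quadrant — can be shown to cross the polygon boundary at most twice on the relevant arc. One clean way to see this: split $\mathcal{P}_k$'s boundary into the ``upper'' boundary (the part facing away from $\origin$ along first-quadrant rays) and note that $\overline{q}$ and each edge of the contour are both translated nullspaces of first-quadrant vectors, so two such lines cross at most once; combined with star-shapedness, the line $\overline{q}$ enters and leaves the interior, and each maximal interior segment is bounded by a boundary crossing, so the number of interior segments is at most one more than... — more carefully, I would argue that the function $\theta \mapsto (\text{depth of } \overline{q} \text{ in direction } \theta) - k$ changes sign at most twice, because the boundary of $\mathcal{P}_k$, as a radial function $\rho(\theta)$, and the radial function of $\overline{q}$ are both ``well-behaved'' (each is the radial graph of a piecewise-nullspace curve), and their difference has at most two sign changes on $[0,\pi/2]$.

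The main obstacle will be pinning down exactly why $\overline{q}$ meets the $k$-polygon's boundary at most twice rather than, say, $O(k)$ times — a priori the contour can be quite non-convex (Lemma~\ref{thm:concavityBound} allows concave pockets with up to $2k-1$ vertices), so a naive line-polygon intersection bound would be bad. The resolution must exploit that $\overline{q}$ is not an arbitrary line but a translated nullspace: I would show that the radial function $\rho_{\mathcal{P}_k}(\theta)$ and the radial function of any translated nullspace, written in the variable $t = \tan\theta$, are each expressible so that their difference is monotone-then-monotone (unimodal in a suitable sense), hence has at most two zeros. Concretely, along the edge of the contour that is the translated nullspace of a vector $\vec{v}$, the reciprocal radial coordinate $1/\rho$ is an affine function of $\tan\theta$ (since $\vec{u}\cdot\vec{v}=\tau$ on that edge, writing $\vec u = \rho\langle\cos\theta,\sin\theta\rangle$ gives $\rho(v_1\cos\theta + v_2\sin\theta) = \tau$); similarly $1/\rho$ on $\overline{q}$ is affine in $\tan\theta$. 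So on the contour, $1/\rho_{\mathcal{P}_k}$ as a function of $\tan\theta$ is piecewise-linear and — here is the point I would nail down — \emph{concave} (the upper envelope-like behaviour of the contour translates to concavity of $1/\rho$), so its difference with the affine function $1/\rho_{\overline q}$ is concave and thus has at most two zeros. Each zero is a boundary crossing, so $\overline{q}$ contributes at most two maximal interior segments, giving at most two angular intervals. Once that concavity claim is established, the rest is bookkeeping about translating segments of $\overline{q}$ back into open angular intervals via the monotone $\theta \leftrightarrow t$ correspondence.
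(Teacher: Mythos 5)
The first thing to note is that the paper states Proposition~\ref{thm:twoSegs} with no proof whatsoever, so your argument has to stand on its own --- and it does not. The load-bearing step is your claim that, in the variable $t=\tan\theta$, the reciprocal radial function of the $k$'th contour is concave, so that its difference with the affine function coming from $\overline{q}$ has at most two zeros. (A small slip first: along a translated nullspace the quantity that is affine in $t$ is $1/(\rho\cos\theta)=(v_1+v_2t)/\tau$, not $1/\rho$; but grant that change of variables.) Working in coordinates $(t,w)$ with $w=1/(\rho\cos\theta)$, each tuple $v$ becomes the line $w=(v_1+v_2t)/\tau$, larger $w$ means closer to $\origin$ (better rank), and the $k$'th contour becomes exactly the $k$-level, counted from above, of this arrangement of lines. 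Only the envelope case $k=1$ is the graph of a convex/concave function; for $k\geq 2$ a $k$-level is in general neither, which is precisely the non-convexity you yourself acknowledge when citing the concave pockets of Lemma~\ref{thm:concavityBound}. So the concavity claim is not a detail left to ``nail down'' --- it is false, and the asserted unimodality of the depth of $\overline{q}$ falls with it.

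Moreover, no repair is possible, because the proposition itself is false. Take $k=5$, $q=(10,10)$, and let $\mathcal{D}$ contain two points at (small perturbations of) each of $(8,12)$, $(4,12)$, $(14,8)$, $(18,8)$, padded with arbitrarily many points near the origin and rescaled into $(0,1]^2$ if desired; the test $(v_1-q_1)+(v_2-q_2)t>0$ from Definition~\ref{def:rtop} is invariant under this rescaling. Writing $t=\tan\theta$, the number of tuples outranking $q$ equals $4,6,4,6,4$ on the intervals $t\in(0,1)$, $(1,2)$, $(2,3)$, $(3,4)$, $(4,\infty)$ respectively, so $RTOP(q)$, and hence $\mrtop(q)$, consists of three disjoint maximal intervals. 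The structural reason is that the count of tuples beating $q$ is a sum of indicators of half-lines in $t$, some increasing and some decreasing, and such a sum can cross the threshold $k$ up to $\Theta(k)$ times; the correct worst-case bound on the number of intervals is therefore $O(k)$, not two. This also undermines Lemma~\ref{thm:arrangeBound}, whose proof rests on this proposition: ``each line contributes at most two edges to the $k$'th contour'' is exactly the claim that every $k$-level has linear complexity with constant independent of $k$, which contradicts known superlinear lower bounds for $k$-levels of line arrangements.
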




\section{Efficiently Answering \mrtop\\Queries}\label{sec:algorithm}
Having established the theoretical foundations in the previous section, we present here our index structure and access method.  A key insight that we derived earlier is that the \mrtop\ response to $q$ is the intersection of $l_q$ with the interior of $\mathcal{P}_k$.  Fittingly, then, our index structure is a representation of $\mathcal{P}_k$ and our access method is an efficient means of retrieving from the index the intersection points of $l_q$ with $\mathcal{P}_k$.  First we give a high-level overview of our algorithms and data structure and then present the precise details in the upcoming subsections.

Not just any representation of $\mathcal{P}_k$ will suffice: it has to facilitate the efficiency of the access method.  We accomplish this by creating a binary search procedure to identify the intersections of $l_q$ with the convex hull of $\mathcal{P}_k$.  This leads to an efficient access method because we established Theorem~\ref{thm:concavityBound}.  We also aim to achieve a very sequential data layout to improve read times.  So, we have developed a data structure consisting of one ordered list of the vertices of the convex hull of $\mathcal{P}_k$ and one ordered list of ordered lists of $\mathcal{P}_k$ vertices not on the convex hull.  We describe the index structure in Section~\ref{sec:representation}. 

Algorithmically speaking, there are two considerations.  Of foremost importance is how to efficiently query the index structure, given $l_q$ (Section~\ref{sec:querying}).  The second consideration is how to efficiently construct (Section~\ref{sec:construction}) the index structure described above.  Let us begin by addressing the first.

The idea is to exploit properties of the problem.  Our binary search to discover the intersection points of $l_q$ with a convex polygon is of logarithmic cost.  Furthermore, given the intersection points of $l_q$ with the convex hull of $\mathcal{P}_k$, we can find the exact intersection of $l_q$ with $\mathcal{P}_k$ by comparing it with every edge ``shaved off'' by that convex hull edge.  By Theorem~\ref{thm:concavityBound}, we know there are most $\mathcal{O}(k)$ such edges.  Because of our sequential layout, a direct comparison to each of these $\mathcal{O}(k)$ edges is affordable.

Our construction algorithm is a plane sweep algorithm.  We sweep radially from the positive $x$-axis to the positive $y$-axis, maintaining a list of all the lines in sorted order with respect to their intersection points on the sweep line.  At any given moment during the plane sweep, the $k$'th line in the list is the edge of the $k$-polygon.  So, identifying the $k$-polygon is equivalent to identifying all the points at which the $k$'th line in that list changes.  These points are the vertices of the $k$-polygon.  Maintaining the convex hull of the polygon is fairly straight-forward if one maintains convexity as an invariant throughout the sweep.

The expense of this algorithm is dominated by initially sorting all the lines with respect to their intersection points with the $x$-axis.  We improve upon this by recognising that only tuples of the $k$-skyband are relevant.  So, at the cost of an extra sequential scan, we approximate the $k$-skyband with perfect recall (i.e., ensure every true positive is in the approximation) and then construct $\mathcal{P}_k$ from that approximation, rather than from all of $\mathcal{D}$.  

The approximation method exploits the work we have already done in this paper.  We note that if a tuple is in the $k$-skyband of $\mathcal{D}$, then it must be in the $k$-skyband of any subset of $\mathcal{D}$.  So, we build our index structure on $2k$ selected tuples from $\mathcal{D}$ and then include in our approximation any tuples which have non-null \mrtop\ query responses on that small index structure.

Together, these algorithms and this data structure gives Theorem~\ref{thm:2dAsymptotics}:

\begin{theorem}\label{thm:2dAsymptotics}
The two dimensional \mrtop\ problem can be solved using $\mathcal{O}(\log{n}+k)$ query time with an index that requires $\mathcal{O}(n)$ disk space.
\end{theorem}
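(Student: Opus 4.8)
The plan is to prove the theorem in two halves, corresponding to its two claims: the $\mathcal{O}(\log n + k)$ query time and the $\mathcal{O}(n)$ space bound. I take $n = |\mathcal{S}_k|$, the size of the true $k$-skyband, since this is what the statement refers to.

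\medskip

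\noindent\textbf{Space bound.} First I would argue that the index consists of exactly two components: an ordered list of the convex-hull vertices of $\mathcal{P}_k$, and an ordered list of ordered lists holding the remaining vertices of $\mathcal{P}_k$ grouped by which hull edge ``shaved'' them off. By Lemma~\ref{thm:arrangeBound}, $\mathcal{P}_k$ arises as the $k$-th \dd\ contour of an arrangement of $\mathcal{O}(n)$ lines (one per $k$-skyband tuple), so it has $\mathcal{O}(n)$ edges and hence $\mathcal{O}(n)$ vertices total. The two lists together store each vertex exactly once (hull vertices in the outer list, the rest partitioned among the inner lists), so the index occupies $\mathcal{O}(n)$ words. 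I would also note that the approximation-and-pruning preprocessing never enlarges the stored structure: since any $k$-skyband tuple of $\mathcal{D}$ is a $k$-skyband tuple of every subset, the $2k$-tuple probe retains a superset of $\mathcal{S}_k$ with perfect recall, but the polygon is ultimately built only over tuples with non-null \mrtop\ responses on that probe, which are themselves confined to $\mathcal{S}_k$; so the final line set has size $\mathcal{O}(n)$.

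\medskip

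\noindent\textbf{Query time.} By Theorem~\ref{thm:ansIsIntersection}, answering a \mrtop\ query for $\vec q$ reduces to intersecting the line $\overline q$ with the interior of $\mathcal{P}_k$. I would decompose the cost of doing this into: (i) finding the at-most-two intersections of $\overline q$ with the convex hull of $\mathcal{P}_k$, and (ii) refining each such hull intersection to the true intersection with $\mathcal{P}_k$. For (i), I would describe the binary search on the ordered hull-vertex list: at each step we compare the slope of $\overline q$ against the slopes of the two hull edges meeting at the current vertex, which tells us on which side the intersection lies, halving the candidate range each time; this terminates in $\mathcal{O}(\log n)$ steps since the hull has $\mathcal{O}(n)$ vertices. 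For (ii), once we know the hull edge $e$ that $\overline q$ crosses, the true boundary of $\mathcal{P}_k$ in that region is the concave chain of vertices shaved off by $e$; by Lemma~\ref{thm:concavityBound} this chain has at most $2k-1$ vertices, so a sequential scan comparing $\overline q$ against each of its $\mathcal{O}(k)$ edges finds the exact crossing in $\mathcal{O}(k)$ time. By Proposition~\ref{thm:twoSegs} there are at most two output intervals, so we invoke (i) and (ii) a constant number of times, giving total query time $\mathcal{O}(\log n + k)$; and since the refinement scans read contiguous inner-list segments, the sequential layout keeps the constants small.

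\medskip

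\noindent\textbf{Main obstacle.} The delicate point is correctness rather than the counting: I must be careful that the binary search in step (i) genuinely locates \emph{every} intersection of $\overline q$ with the hull (there can be zero, one as a tangency, or two), and that in the degenerate ``zero intersections'' case we correctly conclude the response is empty or all of $[0,\pi/2]$ depending on whether $\overline q$ lies entirely outside or inside the hull. Equally, in step (ii) I need the invariant that the vertices shaved off by a single hull edge form one contiguous block of the inner list and lie on a single concave sub-chain of $\mathcal{P}_k$ — this is exactly what makes the $\mathcal{O}(k)$ scan correct, and it follows from the star-shapedness of $\mathcal{P}_k$ (Theorem~\ref{thm:polygonStarShaped}) together with Lemma~\ref{thm:concavityBound}, but the argument linking ``the polygon is star-shaped'' to ``each concavity is a single monotone chain indexed between two consecutive hull vertices'' is the part I expect to require the most care to state cleanly.
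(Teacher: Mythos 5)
Your proposal is correct and follows essentially the same route as the paper: the query bound comes from a binary search on the $\mathcal{O}(n)$-vertex convex hull (the paper's Lemma~\ref{thm:convexHullLgN}) followed by an $\mathcal{O}(k)$ scan of the at most two concavities crossed (Lemma~\ref{thm:concavityBound}), and the space bound comes from Lemma~\ref{thm:arrangeBound} bounding the polygon at $\mathcal{O}(n)$ vertices stored in the dual-array layout. The only cosmetic difference is that the paper counts each vertex as appearing at most twice in the structure (once on the hull, once in a concavity) rather than exactly once, which does not affect the $\mathcal{O}(n)$ bound.
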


Under the practical assumption that $k$ is constant or $\mathcal{O}(\log{n})$, the above theorem implies that query cost is $\mathcal{O}(\log{n})$.

\subsection{The $k$-Polygon Index Structure}\label{sec:representation}
Facilitating logarithmic query time of the index largely depends on how the data is represented.  Our idea is to exploit Theorem~\ref{thm:concavityBound} in our representation.  Let $\mathcal{H}$ denote the set of vertices of the convex hull of a $k$-polygon, $\mathcal{P}_k$.  We maintain two arrays, which we collectively refer to as the {\em dual-array representation} of $\mathcal{P}_k$.  The first, which we call the {\em convex hull array}, contains the $|\mathcal{H}|$ vertices of $\mathcal{H}$, ordered anti-clockwise from the positive $x$-axis.  The second array, which we call the {\em concavity array}, is of size $|\mathcal{H}|-1$.  The $i$'th entry contains a sequential list of the up to $2k-1$ vertices of the $k$-polygon between the $i$'th and $(i+1)$'st vertices of $\mathcal{H}$.

\subsection{Construction of the $k$-Polygon}\label{sec:construction}
Although Section~\ref{sec:properties} suggests how to determine the $k$-polygon of $\mathcal{D}$ by first constructing an arrangement of lines and then extracting from it all the edges at a \dd\ of $k$, here we describe a much more efficient algorithm.  
The key insight is that the only tuples that could form part of the $\mathcal{P}_k$ are those among the $k$-skyband of $\mathcal{D}$.  So, we approximate the $k$-skyband with perfect recall, sort those lines based on their x-intercept (the dominating cost of the algorithm), and introduce a radial plane sweep algorithm to build the polygon index.

\subsubsection*{$k$-Skyband Approximation}\label{sec:skybandApprox}
The important consideration in our $k$-skyband approximation is that perfect recall is critical.  Otherwise, we may miss a line that forms part of the $k$-contour.  We exploit the insight that the $k$ best lines with respect to each axis form a contour relatively close to the real contour, and that if a tuple is in the $k$-skyband, it clearly must be in the $k$-skyband of any subset of the data.  Thus, the approximation algorithm proceeds by quickly determining the $\leq2k$ lines as above, constructing a contour from them, and determining which lines in $\mathcal{D}$ have non-null \mrtop\ query answers on the approximate contour.  See Algorithm~\ref{alg:skybandApprox}.

\begin{algorithm}[h]
 \caption{\label{alg:skybandApprox}Approximating the $k$-skyband of $\mathcal{D}$}
 \begin{algorithmic}[1]
  \STATE \textbf{Input}: $\mathcal{D}$; $k$
  \STATE \textbf{Output}: $\mathcal{S}\subseteq\mathcal{D}$, the tuples that form the $k$-skyband of $\mathcal{D}$, plus potentially some false-positives
  \STATE Initialise $\mathcal{S}$, an empty set of tuples
  \STATE Let $\mathcal{X}$ denote the $k$ tuples in $\mathcal{D}$ with the highest values for attribute $x$
  \STATE Let $\mathcal{Y}$ denote the $k$ tuples in $\mathcal{D}$ with the highest values for attribute $y$
  \STATE Construct $\mathcal{P}_{\mathcal{X}\cup\mathcal{Y}}$, the $k$-polygon index on the set $\mathcal{X}\cup\mathcal{Y}$ using Algorithm~\ref{alg:planeSweep}.
  \FORALL{$p\in\mathcal{D}$}
    \IF{$l_p$ intersects the interior of $\mathcal{P}_{\mathcal{X}\cup\mathcal{Y}}$ or $p\in\mathcal{X}\cup\mathcal{Y}$}
      \STATE Add $p$ to $\mathcal{S}$
    \ENDIF
  \ENDFOR
  \STATE Free $\mathcal{X}$ and $\mathcal{Y}$.
  \STATE RETURN $\mathcal{S}$.
 \end{algorithmic}
\end{algorithm}

\subsubsection*{Radial Plane Sweep}\label{sec:planeSweep}
We construct a contour from a set of lines using a radial plane sweep.  The idea is to traverse the set of intersection points in angular order, maintaining a sorted list of the lines.  In this way, we build the contour incrementally from the positive $x$-axis towards the positive $y$-axis.  Traversing in this order also allows us to maintain convexity of the contour as we go.  Like most plane sweeps, a primary advantage is that we need only look at intersection points between two lines after they become neighbours.  If this does not occur between the sweep line and the positive $y$-axis, then we need not consider the intersection point at all.  Algorithm~\ref{alg:planeSweep} offers the details of the sweep algorithm.

\begin{algorithm}[h]
 \caption{\label{alg:planeSweep}Building $\mathcal{P}_k$ from a $k$-skyband approximation}
 \begin{algorithmic}[1]
  \STATE \textbf{Input}: $\mathcal{L}$, an array of lines sorted by ascending $x$-intercept; $k$
  \STATE \textbf{Output}: A dual-array representation of $\mathcal{P}_k$
  \STATE Initialise an empty array $\mathcal{H}$ for convex hull vertices
  \STATE Initialise an empty array of lists $\mathcal{C}$ for concavities
  \STATE Initialise $\mathcal{I}$ as a priority queue containing the $|\mathcal{L}|-1$ intersections of neighbouring lines in $\mathcal{L}$, sorted by angle from the positive $x$-axis, discarding those $<0$.
  \WHILE{$\mathcal{I}$ is not empty}
    \STATE Pop next intersection $i\in\mathcal{I}$
    \STATE Let $l_{left}$ and $l_{right}$ be the lines intersecting at $i$.
    \IF{$l_{left}=\mathcal{L}_{k-1}$ or $l_{right}=\mathcal{L}_{k-1}$}
      \STATE Add $i$ to $\mathcal{H}$
      \IF{$\exists h\in\mathcal{H}: \mathrm{slope}([h,i])<\mathrm{slope}([h,h+1])$}
        \STATE Add to $\mathcal{C}_h$ all vertices between $h$ and $i$.
        \STATE Remove all vertices between $h$ and $i$ from $\mathcal{H}$ and from $\mathcal{C}_j, \forall j\neq h$.
      \ENDIF
    \ENDIF
    \STATE Swap $l_{left}$ and $l_{right}$ in $\mathcal{L}$
    \STATE Add to $\mathcal{I}$ the intersection of $l_{left}$ with its new neighbouring line and the intersection of $l_{right}$ with its new neighbouring line, provided they are at angles greater than that of $i$ and in the positive quadrant
  \ENDWHILE
  \STATE Free $\mathcal{I}$.
  \STATE RETURN $\mathcal{H}$ and $\mathcal{C}$.
 \end{algorithmic}
\end{algorithm}

\subsection{Querying the $k$-Polygon Index}\label{sec:querying}
Here we present how to query our $k$-polygon index to determine the segments of a line $l_q$ that are strictly contained within the interior of the $k$-polygon, $\mathcal{P}_k$.  The algorithm (Algorithm~\ref{alg:querying}) is a binary search on the convex hull of the polygon, proceeded by a sequential scan of $\mathcal{O}(k)$ edges of $\mathcal{P}_k$.  The recursion is based on the slope of $l_q$ compared to the convex hull of $\mathcal{P}_k$ at the recursion point.  

\begin{algorithm}[h]
 \caption{\label{alg:querying}Querying a dual-array $k$-polygon, $\mathcal{P}_k$}
 \begin{algorithmic}[1]
  \STATE \textbf{Input}: Dual-array representation of $\mathcal{P}_k$, line $l_q$, start/end indexes.
  \STATE \textbf{Output}: Intersection points of $l_q$ with $\mathcal{P}_k$
  \IF{$end-start=2$}
  	\STATE Traverse the $\mathcal{O}(k)$ list in the concavity array at position $start$, returning any intersections with $l_q$.
  	\STATE RETURN.
  \ENDIF  
  \STATE Compute midpoint vertex of $\mathcal{H}$ at $\frac{end-start}{2}+start$.
  \IF{$l_q$ passes above midpoint}
	\IF{slope of $l_q$ is less than slope of [midpoint-1, midpoint]}
		\STATE Recurse on lower half with end=midpoint
	\ELSIF{slope of $l_q$ is greater than slope of [midpoint, midpoint+1]}
		\STATE Recurse on upper half with start=midpoint
	\ENDIF
  \ELSE 
  	\IF{$l_q$ passes above vertex at position start}
  	  \STATE Recurse on lower half with end=midpoint
  	\ENDIF
  	\IF{$l_q$ passes above vertex at position end}
  	  \STATE Recurse on upper half with start=midpoint
  	\ENDIF
  \ENDIF
 \end{algorithmic}
\end{algorithm}

\subsection{Asymptotic Performance}\label{sec:asymp}
Earlier we stated the asymptotic performance of our algorithms.  Here, now, we have the tools to prove that theorem.  The basic idea is that a line can only intersect a convex shape in two locations and for each of those intersection points, the cost of a face traversal is bounded.

\begin{proof}[of Theorem~\ref{thm:2dAsymptotics}]
First, note that a line can only intersect the boundary of a convex polygon in at most two points, so the binary search tree traversal need follow at most two paths.  Recall from Lemma~\ref{thm:arrangeBound} that each contour contains at most $n$ cells, and thus the convex hull contains at most $n-1$ edges.  From Lemma~\ref{thm:convexHullLgN}, the binary search requires $\mathcal{O}(\log{n})$.  For each of the two intersection points found, we traverse the corresponding face sequentially.  From Lemma~\ref{thm:concavityBound}, each of these faces contains $\mathcal{O}(k)$ edges and we know that finding the intersection (or, equivalently, ascertaining the non-intersection) of two two-dimensional line segments requires constant time.

Since the search is run independently of and its cost dominates the cost of the face traversals, and since we assume $k$ is $\mathcal{O}(\log{n})$, the entire query procedure is $\mathcal{O}(\log{n})$.

Regarding the space requirements, Lemma~\ref{thm:arrangeBound} implies that polygon itself can contain at most $\mathcal{O}(n)$ vertices.  Because each vertex could appear at most twice in the data structure (one on the convex hull and once in a single concavity), and because the data structure is, simply, the vertices of the $k$-polygon, the disk space required by the data structure is $\mathcal{O}(n)$.
\end{proof}

\begin{lemma}\label{thm:convexHullLgN}
The intersection of the query line with the convex hull can be determined in $\mathcal{O}(\log{n})$.
\end{lemma}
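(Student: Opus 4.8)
The plan is to implement a standard geometric binary search for the intersection of a line with a convex polygon, keyed on the anticlockwise ordering of the convex-hull array. First I would establish the data we have to work with: by Lemma~\ref{thm:arrangeBound} the $k$-polygon has $\mathcal{O}(n)$ edges, hence its convex hull $\mathcal{H}$ has $\mathcal{O}(n)$ vertices, and these are stored in a single array ordered anticlockwise from the positive $x$-axis. Since the hull is a convex, star-shaped polygon around $\origin$, the vertices are also sorted by angle as seen from $\origin$, and — more to the point for Algorithm~\ref{alg:querying} — the upper boundary of the hull is a concave chain whose edge slopes are monotonically decreasing as we move anticlockwise. This monotonicity is exactly what makes a binary search correct.

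Next I would argue the correctness of the recursion step in Algorithm~\ref{alg:querying}. At a midpoint vertex $v$ of $\mathcal{H}$, there are two cases. If $l_q$ passes above $v$, then because the hull boundary turns monotonically, comparing the slope of $l_q$ against the slopes of the two hull edges incident to $v$ tells us unambiguously on which side(s) of $v$ the intersection(s) with the hull lie: if $l_q$ is shallower than the incoming edge $[v-1,v]$ the crossing is in the lower half, if it is steeper than the outgoing edge $[v,v+1]$ the crossing is in the upper half, and otherwise $l_q$ lies entirely outside the hull near $v$ (no intersection through that midpoint). If $l_q$ passes below or through $v$, then $v$ is between the two boundary crossings along that part of the hull, and we can recurse independently into each half, using the ``above/below the endpoint vertex'' tests shown in the algorithm to prune halves that cannot contain a crossing. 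Each recursive call halves the index range, and the base case $end-start=2$ is reached after $\mathcal{O}(\log|\mathcal{H}|)=\mathcal{O}(\log n)$ steps; the work per level is $\mathcal{O}(1)$ slope comparisons and a point-above-line test, each constant time. Summing the geometric series gives the claimed $\mathcal{O}(\log n)$ bound. Note that a line meets a convex polygon in at most two boundary points, so at most two root-to-leaf paths are ever followed, which only doubles the constant.

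The main obstacle I anticipate is the case analysis around a midpoint that $l_q$ does {\em not} pass above — i.e., ensuring that recursing into {\em both} halves still yields an overall $\mathcal{O}(\log n)$ cost rather than degenerating toward linear. The resolution is that once $l_q$ passes below a vertex $v$, at most one crossing can lie in each half, and the endpoint tests (``does $l_q$ pass above the vertex at \texttt{start}/\texttt{end}?'') immediately discard any half whose entire subchain lies below $l_q$; so in fact at most two leaves are ever reached in total across the whole recursion, and the tree of explored calls has $\mathcal{O}(\log n)$ nodes. A secondary, more routine obstacle is handling degeneracies — $l_q$ tangent to the hull at a vertex, or $l_q$ parallel to a hull edge — but these contribute only boundary (measure-zero) directions and are harmless because \mrtop\ returns open intervals, exactly as remarked in the footnote to Corollary~\ref{thm:connectedContours}. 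I would conclude by noting that the binary search returns the one or two hull edges crossed by $l_q$, and that the subsequent $\mathcal{O}(k)$ scan of the corresponding concavity lists (justified by Lemma~\ref{thm:concavityBound}) converts these into the exact intersections with $\mathcal{P}_k$; but that refinement is outside the scope of this lemma and is folded into the proof of Theorem~\ref{thm:2dAsymptotics}.
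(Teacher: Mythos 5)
Your proposal is correct and follows essentially the same route as the paper: a binary search on the anticlockwise-ordered hull array, using slope comparisons at the midpoint when $l_q$ passes above it, endpoint above/below tests when it passes below, and the observation that a line meets a convex polygon in at most two boundary points so at most two root-to-leaf paths (hence $\mathcal{O}(\log n)$ total work) are ever followed. The added remarks on slope monotonicity of the upper hull chain and on degeneracies are consistent elaborations of what the paper leaves implicit.
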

\begin{proof}
The intersection algorithm proceeds by binary search.  First, find the middle vertex $v_{n/2}$ and determine whether the query line passes above or below it.  If above then recurse left if the query line has shallower slope than edge $(v_{n/2}, v_{n/2+1})$.  Recurse right if the query line has steeper slope than edge $(v_{n/2-1},v_{n/2})$.  Because edge $(v_{n/2}, v_{n/2+1})$ is shallower than edge $(v_{n/2-1},v_{n/2})$, at most one recursion direction can be followed.

If, instead, the query line passes below $v_{n/2}$, then it is inside the contour (if in the correct quadrant at all).  To find the intersection points, recurse left if the query line passes above $v_{n-1}$.  Recurse right if the query line passes above $v_0$.  It is possible that both conditions are true, but this can only occur once, because the truth of the condition implies an intersection point and a straight line has at most two intersection points with a convex polygon.  Therefore, the binary search follows at most two distinct paths.
\end{proof}

\section{Experimental Evaluation}\label{sec:exper}
Until now, the focus of this paper has been on proving the correctness and asymptotic performance of our approach to indexing for monochromatic reverse $\topk$ queries.  Here, we pursue a different direction, examining the question of performance in more detail through experimentation.  In particular, we seek to address two questions.  As we showed earlier, if the size of the $k$-polygon is $|DS|$ and the size of its convex hull is $|CH|$, then the query cost of our index is $\mathcal{O}(k + \log\ |CH|)$.  So, a natural question is what a typical value of $|DS|$ and of $|CH|$ might be.  This is also important because it indicates how much space the data structure will consume on disk once built.  The second question is that of raw performance: in how much time can the index be built and, later, be queried?  To contextualise these performance numbers, we compare the performance of our index to that of repeatedly executing the non-indexed-based algorithms of Vlachou\ea\ and of Wang\ea

\subsection{Experimental Setup}
For the experiments, we implemented and optimised the algorithms of Vlachou\ea, of Wang\ea, and of this paper (Chester\ea) in C and compiled our implementations with the GNU C compiler 4.4.5 using the {\em -O6} flag.  Our implementations of Vlachou\ea\ and of Wang\ea\ do not produce maximal intervals, although, naturally, ours does.  We ignore the cost of outputting the response, because this is moreorless the same for each algorithm.  On the other hand, each interprets the tuples differently, so we do include in the measurements the cost of reading the input files.

We ran the experiments on a machine with an AMD Athlon processor with four 3GHz overclocked cores and 8GB RAM, running Ubuntu.  The timings were calculated twice, once using the linux {\em time} command and once using the {\em gnu profiler} by compiling with the {\em -pg} flag.

\paragraph{Datasets}
We use the regular season basketball player statistics from {\em databasebasketball.com} and generate five different datasets with which to perform our experiments by projecting combinations of two attributes.  The attribute combinations are chosen to diversify the degree of (anti-)correlation based on intuitive reasoning about the attributes.  In particular we study the following pairs: (Points, Field Goals Made), (Defensive Rebounds, Blocks), (Personal Fouls, Free Throw Attempts), (Defensive Rebounds, Assists), (Blocks, Three Pointers Made).  A traditional $\topk$ query on each pair is equivalent to asking for the $k$ best player seasons according to a given blend of the skills. (Note that for the first pair and a query (1,0), for example, Wilt Chamberlain would appear several times, once for each of his sufficiently high scoring seasons.)

We reserve the most recent season, 2009, as a set of $578$ query points and use the other seasons, 1946-2008, as the dataset of $21383$ tuples.  As such, each monochromatic reverse $\topk$ query is equivalent to asking, ``for which blends, if any, of the given two skills was this particular player's performance this season ranked in the $\topk$ all-time?''  This contrasts to traditional analysis of basketball data which would look only at the axes at the detriment of {\em rounded} players.  

In terms of preprocessing on the data, we elect not to remove multiple tuples for players who played on more than one team in a given year.  We scale the data to the range $(0,1]$ by adding $1$ to each value and dividing by the largest value for each attribute (plus one), so that the attributes are comparable in range.  Also, we slightly perturb the data so as not to violate our general position assumption by adding $10^{-8}$ to each duplicated scaled value until all values are unique for each attribute.  To construct our index and to process incoming queries, we assume a value of $\tau=.5$.\footnote{This choice really is arbitrary within reason.  We tried several values in the range [.25, 1.5] without any effect on the output.}

\subsection{Experimental Results}


\begin{figure*}[htb]
 \centering
 \includegraphics[scale=.75, clip=true, trim=.5in 3.75in 0 3.5in]{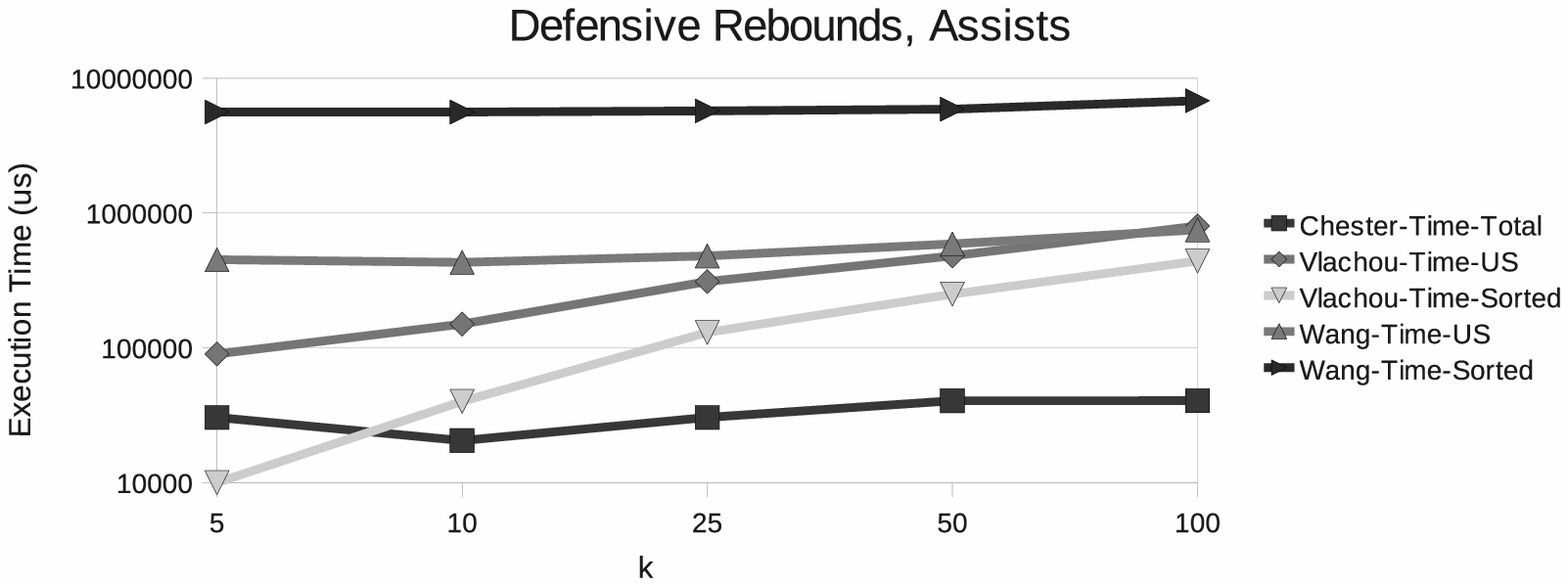}
 \caption{\label{fig:res-drebs-asts}Execution time for the implemented algorithms on the batch of 578 queries comprised of 2009 basketball statistics, using the statistics from 1946-2008 as a dataset.  {\em Defensive Rebounds} is regarded as the $x$-attribute and {\em Assists} is regarded as the $y$-attribute.  This is meant to reflect anticorrelated attributes, but the data appears to be more correlated. }
\end{figure*}

\begin{figure*}[tb]
 \centering
 \includegraphics[scale=.75, clip=true, trim=.5in 3.5in 0 3.5in]{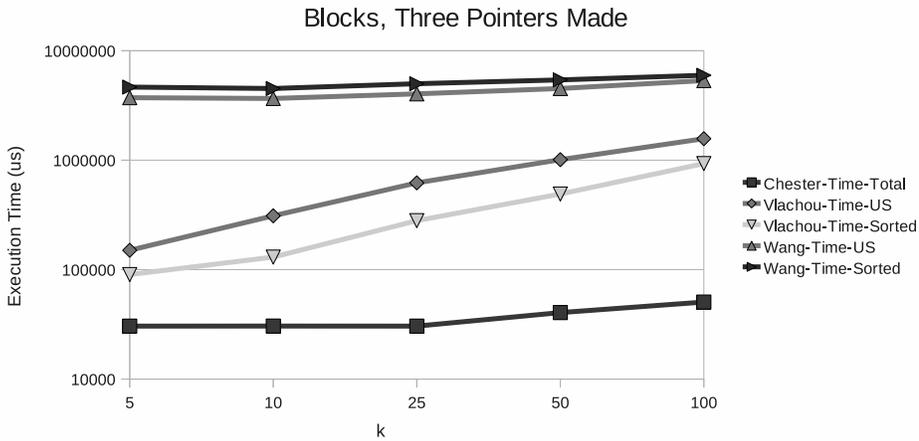}
 \caption{\label{fig:res-blks-tpm}Execution time for the implemented algorithms on the batch of 578 queries comprised of 2009 basketball statistics, using the statistics from 1946-2008 as a dataset.  {\em Blocks} is regarded as the $x$-attribute and {\em Three Pointers Made} is regarded as the $y$-attribute.  This is meant to reflect anticorrelated attributes.}
\end{figure*}

\begin{figure*}[bt]
 \centering
 \includegraphics[scale=.75, clip=true, trim=.5in 3.75in 0 3.5in]{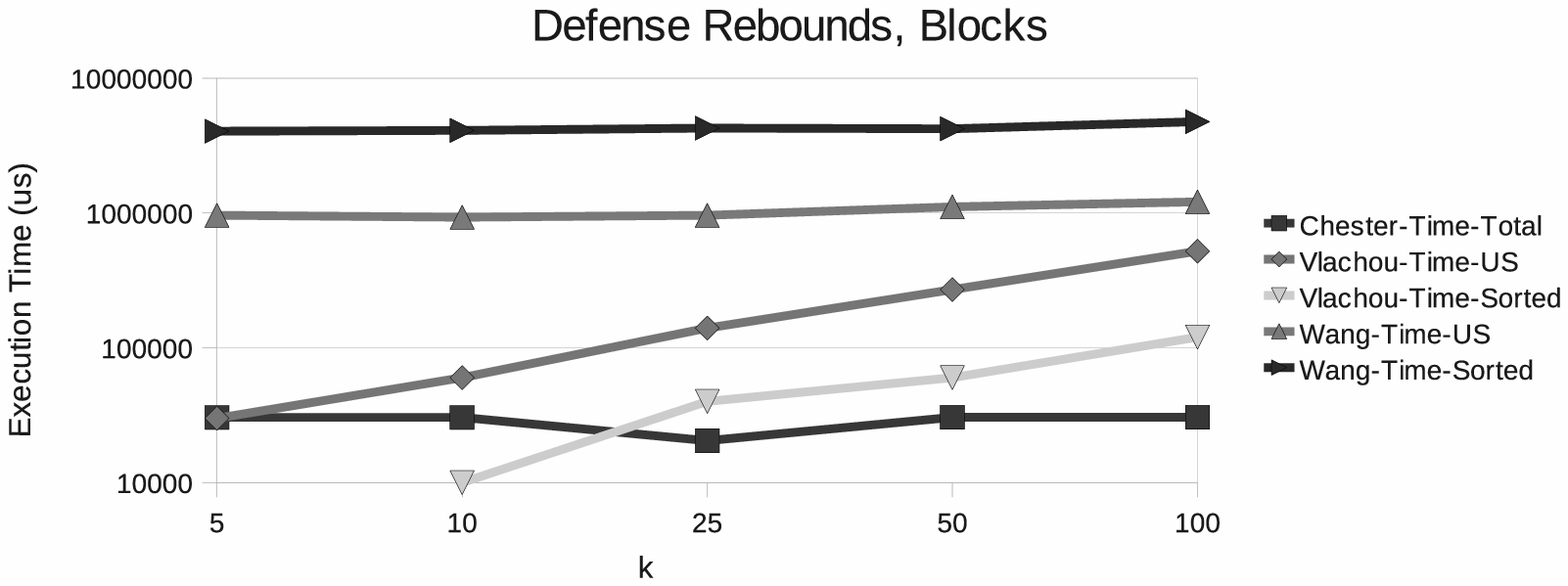}
 \caption{\label{fig:res-drebs-blks}Execution time for the implemented algorithms on the batch of 578 queries comprised of 2009 basketball statistics, using the statistics from 1946-2008 as a dataset.  {\em Defensive Rebounds} is regarded as the $x$-attribute and {\em Blocks} is regarded as the $y$-attribute.  This is meant to reflect correlated attributes, but the data appears to be more anticorrelated. }
\end{figure*}

One intention of these experiments was to illustrate how construction and query time varied for our algorithm with respect to $k$ and different attribute combinations.  However, the execution time of our algorithm is pretty much constant across values of $k$ and choices of attributes on the basketball dataset.  In fact, across all experiments the construction time has an average duration of $34$ms with a standard deviation of $9.1$ms.  The query time has an average duration of $480\mu$s with a standard deviation of $22\mu$s.  The total time for construction and querying averages $35$ms and has a standard deviation of $8.8$ms.  Figures~\ref{fig:res-drebs-asts}-\ref{fig:res-drebs-blks} illustrate the total execution times for the three algorithms on three of the attribute combinations.\footnote{We omitted results for the pair (Points, Field Goals Made) because it was very similar to the results for the pair (Defensive Rebounds, Blocks) and the pair (Personal Fouls, Free Throw Attempts) because it was very similar to the results for the pair (Defensive Rebounds, Assists), just with a larger separation between the lines.} We observed that the algorithms of Vlachou\ea\ and of Wang\ea\ are rather sensitive to the sortedness of the input data, so we report their performances both for when the data is presorted by $y$ value and when that presorted file is randomized with the linux command {\em sort -R}.

The other primary intention of the experiments was to evaluate the size of our data structure, particularly since it has a strong effect on the query time.  We show the contours generated for $k=[1,4]$ in Figures~\ref{fig:sage_ind}~and~\ref{fig:sage_ac} for two of the attribute combinations.\footnote{We omit figures for the other three combinations because the contours are too close together to interpret easily.}  We illustrate in Figures~\ref{fig:ds-blks-tpm}~and~\ref{fig:ds-pts-fgm} how the size of the data structure varies with $k$ on the attribute pairs (Personal Fouls, Free Throws Attempted) and (Points, Field Goals Made), respectively.  The former pair produced the largest data structures and the latter, the smallest.  The other three experiments all exhibited very similar behaviour, with the convex hull remaining relatively constant and the total size growing linearly with $k$, and magnitudes between these examples.

\begin{figure}
 \centering
  \includegraphics[scale=.45, clip=true, trim=0 2.45in 0 2.5in]{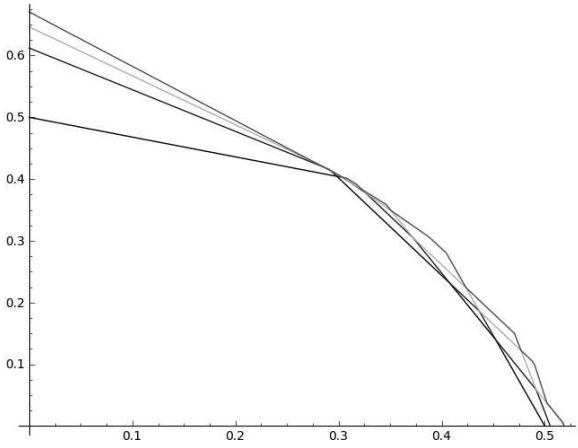}
  \caption{\label{fig:sage_ind}The first four contours derived on the basketball dataset with {\em personal fouls} as the $x$ attribute and {\em free throws attempted} as the $y$ attribute.}
\end{figure}

\begin{figure}
 \centering
  \includegraphics[scale=.45]{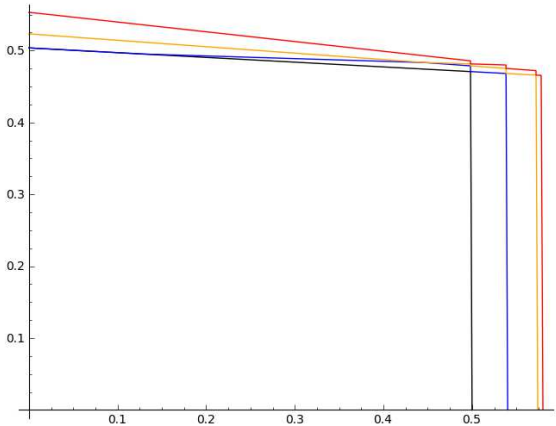}
  \caption{\label{fig:sage_ac}The first four contours derived on the basketball dataset with {\em blocks} as the $x$ attribute and {\em three pointers made} as the $y$ attribute.}
\end{figure}

\begin{figure}
 \centering
  \includegraphics[scale=.7, clip=true, trim=1.5in 3.5in 0in 3.5in]{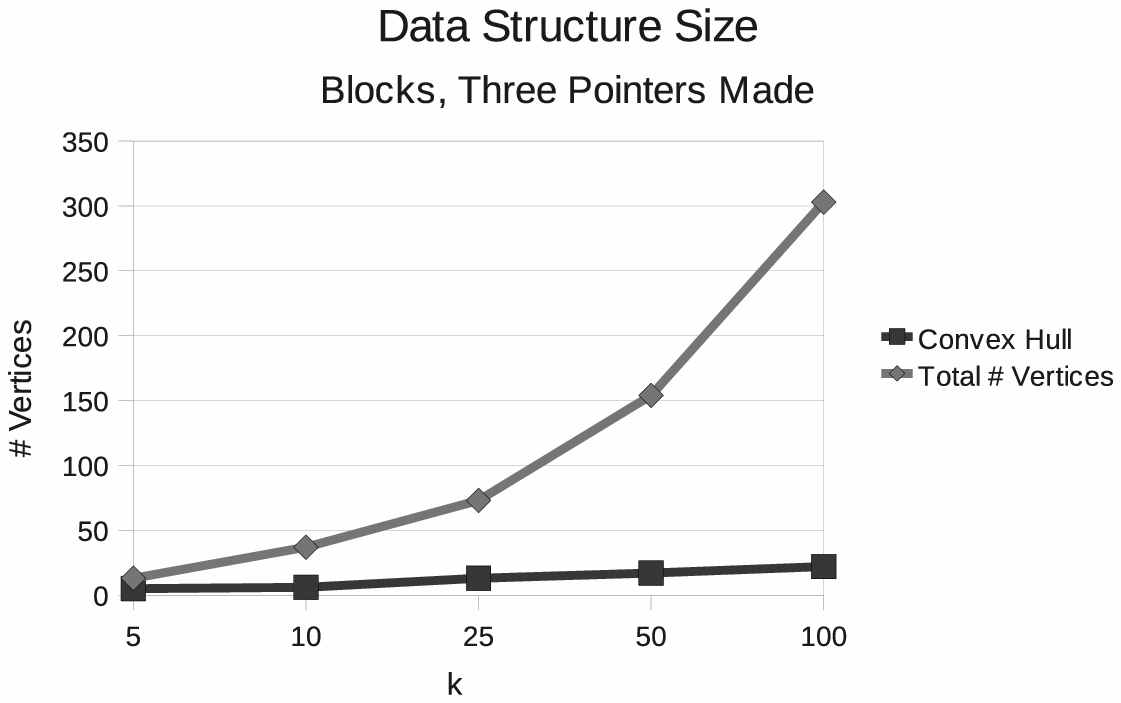}
  \caption{\label{fig:ds-blks-tpm}The size of the contours derived on the basketball dataset with {\em personal fouls} as the $x$ attribute and {\em free throws attempted} as the $y$ attribute, shown as a function of $k$.}
\end{figure}

\begin{figure}
 \centering
  \includegraphics[scale=.7, clip=true, trim=1.5in 3.5in 0in 3.5in]{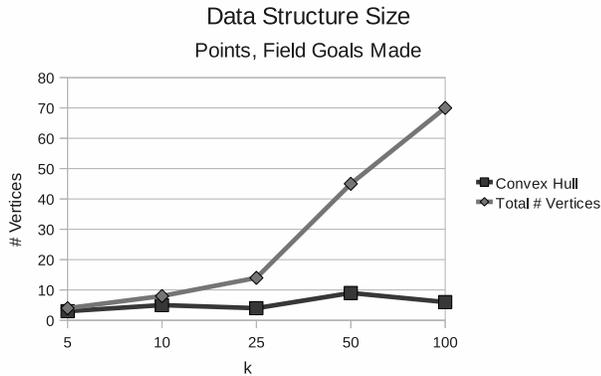}
  \caption{\label{fig:ds-pts-fgm}The size of the contours derived on the basketball dataset with {\em personal fouls} as the $x$ attribute and {\em free throws attempted} as the $y$ attribute, shown as a function of $k$.}
\end{figure}

\subsection{Discussion}
Overall, our indexing does quite well, with a query cost slightly less than $1\mu$s per query, independent of $k$, typically three to four orders of magnitude faster than the other two algorithms.  In fact, our algorithm in most cases runs one or two orders of magnitude faster, even with the construction cost included.  This implies that, while the purpose of this technique is to support an indexing scenario, the index construction is sufficiently fast to render it feasible in non-indexing scenarios, too.

That the query time for the index does not vary much is not surprising in light of the results of the data structure size analysis.  We see from Figures~\ref{fig:ds-blks-tpm}~and~\ref{fig:ds-pts-fgm} that the convex hull is consistently under forty vertices, and, from Lemma 4.8, we know that $|DS|\leq (2k-1)|CH|$, which explains the growth of $|DS|$ with respect to $k$.  

Since the query cost of our index is thus $\mathcal{O}(\log\ 40 + k)$, our performance is quite realistic.  The speed of the construction is more surprising, on the other hand, since its cost is proportional to the number of intersection points in the positive quadrant of the dual data lines.  This could be related to the choice of dataset because there could be a strong stratification of the lines such that they do not intersect in the positive quadrant given how strongly the statistics are influenced by playing time.  Nonetheless, quick construction time is not the primary objective of the index, anyway.

It is worth noting that there are a few instances in which the algorithm of Vlachou\ea\ outperforms our index for low values of $k$, especially on sorted data. (This is especially noticeable in Figure~\ref{fig:res-drebs-blks}, wherein the algorithm accumulates no time at the granularity of the {\em time} command.)  This can be easily explained because as soon as $k$ points are seen in the dataset that dominate the query, a null result can be reported and the Vlachou\ea\ algorithm can be halted.  When $k$ is low, this is substantially more likely.  When the data is sorted, these dominating points will be among the first seen.

A last observation for discussion is the difference in the shape of the contours produced by different data distributions (Figures~\ref{fig:sage_ind}~and~\ref{fig:sage_ac}).  The exaggerated slopes in the former, contrasted against the intricate weaving patterns in the latter, reflect the anticorrelatedness of the underlying data.  Insight into the shapes of contours could be a grounds for future work on $k$-polygon construction algorithms.

\section{Conclusion and Future Work}\label{sec:conclusion}
In this paper we introduced an index structure to asympotically improve query performance for reverse $\topk$ queries.  We approach the problem novelly by representing the dataset as an arrangement of lines and demonstrating that embedded in the arrangement is a critical $k$-polygon which encodes sufficient knowledge to respond to reverse $\topk$ queries.  In particular, we show that by applying the same transformation to the query tuple to produce a query line $l_q$, we can retrieve the response to the reverse $\topk$ query on $q$ by intersecting $l_q$ with the interior of the $k$-polygon.

We derive geometric properties of the problem to bound the query cost and size of our data structure as $\mathcal{O}(\log{n})$ and $\mathcal{O}(n)$, respectively.  We also conduct an experimental analysis to augment our theoretical analysis and demonstrate both that our algorithm significantly outperforms literature as the number of queries increases and that our data structure requires little disk space.

We believe this work can be extended in many directions.  Particularly, we feel that our index structure could lead to improved execution times for {\em bichromatic} reverse $\topk$ queries, as well.  Also, our geometric analysis of the problem space offers insight into traditional, linear $\topk$ queries, and exploring whether some of this research can be applied in that context is an interesting avenue.  Thirdly, still the difficult question of higher dimensions, and especially the question of how to represent solutions to higher dimension \mrtop\ queries, is open.

\bibliographystyle{abbrv}
\bibliography{mrtop-refs}

\end{document}